\newcommand{\utmtool}{SkyTrakx\xspace}
\begin{document}

\makeatletter
\DeclareRobustCommand*\textsubscript[1]{%
	\@textsubscript{\selectfont#1}}
\def\@textsubscript#1{%
	{\m@th\ensuremath{_{\mbox{\fontsize\sf@size\z@#1}}}}}
\makeatother


\newcommand{\authcomment}[1]{\textbf{[[#1]]}}
\newcommand{\sayan}[1]{\textcolor{blue}{[[#1]]}}
\newcommand{\chiao}[1]{\textcolor{red}{[[#1]]}}
\newcommand{\hussein}[1]{\textcolor{orange}{[[#1]]}}
\newcommand{\sembrack}[1]{[\![#1]\!]}

\newcommand{\nnt}{{\sf T}^{\geq 0}}             
\newcommand{\post}{{\sf T}^{>0}}                
\newcommand{\Variables}{{\sf V}}                

\newcommand{\num}[1]{\relax\ifmmode \mathbb #1\else $\mathbb #1$\fi}
\newcommand{\nnnum}[1]{\relax\ifmmode 
  {\mathbb #1}_{\geq 0} \else ${\mathbb #1}_{\geq 0}$
  \fi}
\newcommand{\npnum}[1]{\relax\ifmmode 
  {\mathbb #1}_{\leq 0} \else ${\mathbb #1}_{\leq 0}$
  \fi}
\newcommand{\pnum}[1]{\relax\ifmmode 
  {\mathbb #1}_{> 0} \else ${\mathbb #1}_{> 0}$
  \fi}
\newcommand{\nnum}[1]{\relax\ifmmode 
  {\mathbb #1}_{< 0} \else ${\mathbb #1}_{< 0}$
  \fi}
\newcommand{\plnum}[1]{\relax\ifmmode 
  {\mathbb #1}_{+} \else ${\mathbb #1}_{+}$
  \fi}
\newcommand{\nenum}[1]{\relax\ifmmode 
  {\mathbb #1}_{-} \else ${\mathbb #1}_{-}$
  \fi}

\newcommand{\reals}{{\num R}}                    
\newcommand{\nnreals}{{\nnnum R}}                    
\newcommand{\realsinfty}{{\num R} \cup \{\infty, -\infty\}}                    
\newcommand{\plreals}{{\plnum R}}                    
\newcommand{\naturals}{{\num N}}                      
\newcommand{\integers}{{\num Z}}                      
\newcommand{\rationals}{{\num Q}}                      
\newcommand{\nnrationals}{{\nnnum Q}}                   
\newcommand{\Time}{{\num T}}  
\newcommand{\bools}{{\num B}}  
\newcommand{\plintegers}{{\plnum Z}}                      

\newcommand{\extb}[1]{\relax\ifmmode {\sf ExtBeh}_{#1} \else ${\sf ExtBeh}_{#1}$\fi} 
\newcommand{\tdists}[1]{\relax\ifmmode {\sf Tdists}_{#1} \else ${\sf Tdists}_{#1}$\fi} 

\newcommand{\exec}[1]{\relax\ifmmode {\sf Execs}_{#1} \else ${\sf Exec}_{#1}$\fi} 
\newcommand{\execf}[1]{\relax\ifmmode {\sf Execs}^*_{#1} \else ${\sf Exec}^*_{#1}$\fi} 
\newcommand{\execi}[1]{\relax\ifmmode {\sf Execs}^\omega_{#1} \else ${\sf Exec}^\omega_{#1}$\fi} 

\newcommand{\ctrace}[1]{\relax\ifmmode {\sf Ctraces}_{#1} \else ${\sf Ctraces}_{#1}$\fi} 

\newcommand{\trace}[1]{\relax\ifmmode {\sf Traces}_{#1} \else ${\sf Traces}_{#1}$\fi} 
\newcommand{\tracef}[1]{\relax\ifmmode {\sf Traces}^*_{#1} \else ${\sf Traces}^*_{#1}$\fi} 
\newcommand{\tracei}[1]{\relax\ifmmode {\sf Traces}^\omega_{#1} \else ${\sf Traces}^\omega_{#1}$\fi} 
\newcommand{\traceof}[1]{\relax\ifmmode {\sf trace}_{#1} \else ${\sf trace}_{#1}$\fi} 
\newcommand{\traceinv}[1]{\relax\ifmmode {\sf TraceInv}_{#1} \else ${\sf TraceInv}_{#1}$\fi}

\newcommand{\frag}[1]{\relax\ifmmode {\sf Frags}_{#1} \else ${\sf Frags}_{#1}$\fi} 
\newcommand{\fragf}[1]{\relax\ifmmode {\sf Frags}^*_{#1} \else ${\sf Frags}^*_{#1}$\fi} 
\newcommand{\fragi}[1]{\relax\ifmmode {\sf Frags}^\omega_{#1} \else ${\sf Frags}^\omega_{#1}$\fi} 

\newcommand{\reach}[1]{\relax\ifmmode {\sf Reach}_{#1} \else ${\sf Reach}_{#1}$\fi} 
\newcommand{\reachtube}[1]{\relax\ifmmode {\sf ReachTube}_{#1} \else ${\sf ReachTube}_{#1}$\fi} 

\newcommand{\pair}[2]{\relax\ifmmode \langle #1, #2 \rangle \else $\langle #1, #2 \rangle$\fi} 

\newcommand{\TE}{\relax\ifmmode \mathit{Time} \else $\mathit{Time}$ \fi} 
\newcommand{\EQ}{\relax\ifmmode \mathit{Enq} \else $\mathit{Enq}$ \fi} 
\newcommand{\DQ}{\relax\ifmmode \mathit{Deq} \else $\mathit{DeqTime}$ \fi} 
\newcommand{\E}{\relax\ifmmode \mathsf{E} \else $\mathsf{E}$ \fi}

\newcommand{\loc}{\relax\ifmmode \mathit{loc} \else $\mathit{loc}$ \fi}
\newcommand{\abs}{\relax\ifmmode \mathit{abs} \else $\mathit{abs}$ \fi}

\newcommand{\execs}{{\exec{}}}
\newcommand{\traces}{{\trace{}}}
\newcommand{\fragss}{{\frag{}}}

\newcommand{\fexecs}{{\execf{}}}
\newcommand{\ftraces}{{\tracef{}}}
\newcommand{\ffragss}{{\fragf{}}}

\newcommand{\iexecs}{{\execi{}}}
\newcommand{\itraces}{{\tracei{}}}
\newcommand{\ifragss}{{\fragi{}}}

\newcommand{\fstate}{{\sf fstate}}  
\newcommand{\lstate}{{\sf lstate}}  
\newcommand{\ltime}{{\sf ltime}}  
\newcommand{\clk}{{\mathit clk}}  
\newcommand{\msgs}{{\mathit msghist}}  

\newcommand{\length}[1]{\relax\ifmmode  \mathit{length}(#1) \else $\mathit{length}(#1)$\fi} 
\newcommand{\hbclose}[1]{\relax\ifmmode \mathit{before}(#1) \else $\mathit{before}(#1)$\fi} 
\newcommand{\haclose}[1]{\relax\ifmmode \mathit{after}(#1) \else $\mathit{after}(#1)$\fi} 
\newcommand{\trover}[1]{\relax\ifmmode \mathit{reach}_{#1} \else $\mathit{reach}_{#1}$\fi} 
\newcommand{\obsint}[1]{\relax\ifmmode \mathit{obs}_#1 \else $\mathit{obs}_#1$\fi} 
\newcommand{\Post}[3]{\relax\ifmmode \mathit{Post}_{#1}(#2,#3) \else $\mathit{Post}_{#1}(#2,#3)$\fi} 
\newcommand{\Pre}[3]{\relax\ifmmode \mathit{Pre}_{#1}(#2,#3) \else $\mathit{Pre}_{#1}(#2,#3)$\fi}

\newcommand{\pow}[1]{{\bf P}(#1)} 
\newcommand{\inverse}[1]{#1^{-1}}
\newcommand{\range}[1]{\ms{range{(#1)}}}
\newcommand{\domain}[1]{{\it dom}(#1)}
\newcommand{\type}[1]{\ms{type{(#1)}}}
\newcommand{\dtype}[1]{\ms{dtype{(#1)}}} 
\newcommand{\restr}{\mathrel{\lceil}}
\newcommand{\restrrange}{\mathrel{\downarrow}}
\newcommand{\point}[1]{\wp(#1)}                 
\newcommand{\proj}[2]{#1\!\!\upharpoonright_{#2}}

\def\A{{\cal A}} 
\def\B{{\cal B}} 
\def\C{{\cal C}} 
\def\D{{\cal D}} 
\def\E{{\cal E}} 
\def\F{{\cal F}} 
\def\G{{\cal G}} 
\def\H{{\cal H}} 
\def\I{{\cal I}} 
\def\K{{\cal K}} 
\def\L{{\cal L}} 
\def\M{{\cal M}} 
\def\O{{\cal O}} 
\def\P{{\cal P}} 
\def\Q{{\cal Q}} 
\def\R{{\cal R}} 
\def\S{{\cal S}} 
\def\T{{\cal T}} 
\def\V{{\cal V}} 
\def\U{{\cal U}} 
\def\X{{\cal X}} 
\def\Y{{\cal Y}} 
\def\Z{{\cal Z}} 


\newcommand{\col}[1]{\relax\ifmmode \mathscr #1\else $\mathscr #1$\fi}

\def\statemodels{\col{S}}

\definecolor{HIOAcolor}{rgb}{0.776,0.22,0.07}
\newcommand{\HIOA}{\textcolor{HIOAcolor}{\tt HIOA\hspace{3pt}}}
\newcommand{\PVS}{\textcolor{HIOAcolor}{\tt PVS\hspace{3pt}}}
\newcommand{\PVSnogap}{\textcolor{HIOAcolor}{\tt PVS\hspace{1pt}}}
\newcommand{\HIOAbiggap}{\textcolor{HIOAcolor}{\tt HIOA\hspace{6pt}}}
\newcommand{\HIOAnogap}{\textcolor{HIOAcolor}{\tt HIOA}}
\newcommand{\anyrelation}{\lessgtr}

\newcommand{\SC}[2]{\relax\ifmmode {\tt Scount}(#1,#2) \else ${\tt Scount}(#1,#2)$\fi} 
\newcommand{\SCM}[2]{\relax\ifmmode {\tt Smin}(#1,#2) \else ${\tt Smin}(#1,#2)$\fi} 
\newcommand{\Aut}[1]{\relax\ifmmode {\tt Aut}(#1) \else ${\tt Aut}(#1)$\fi} 

\newcommand{\auto}[1]{{\operatorname{\mathsf{#1}}}}
\newcommand{\act}[1]{{\operatorname{\mathsf{#1}}}}
\newcommand{\smodel}[1]{{\operatorname{\mathsf{#1}}}}
\newcommand{\pvstheory}[1]{{\operatorname{\mathit{#1}}}}

\newcommand{\Automaton}{{\bf automaton}}
\newcommand{\Asserts}{{\bf asserts}}
\newcommand{\Assumes}{{\bf assumes}}
\newcommand{\Backward}{{\bf backward}}
\newcommand{\By}{{\bf by}}
\newcommand{\Case}{{\bf case}}
\newcommand{\Choose}{{\bf  choose}}
\newcommand{\Components}{{\bf components}}
\newcommand{\Const}{{\bf const}}
\newcommand{\Converts}{{\bf converts}}
\newcommand{\Do}{{\bf do}}
\newcommand{\Eff}{{\bf eff}}
\newcommand{\Else}{{\bf else}}
\newcommand{\Elseif}{{\bf elseif}}
\newcommand{\Enumeration}{{\bf enumeration}}
\newcommand{\Ensuring}{{\bf ensuring}}
\newcommand{\Exempting}{{\bf exempting}}
\newcommand{\Fi}{{\bf fi}}
\newcommand{\For}{{\bf for}}
\newcommand{\Forward}{{\bf forward}}
\newcommand{\Freely}{{\bf freely}}
\newcommand{\From}{{\bf from}}
\newcommand{\Generated}{{\bf generated}}
\newcommand{\Local}{{\bf local}}
\newcommand{\Hidden}{{\bf hidden}}
\newcommand{\If}{{\bf if}}
\newcommand{\In}{{\bf in}}
\newcommand{\Implies}{{\bf implies}}
\newcommand{\Includes}{{\bf includes}}
\newcommand{\Introduces}{{\bf introduces}}
\newcommand{\Input}{{\bf input}}
\newcommand{\Kind}{{\bf kind}}
\newcommand{\Initially}{{\bf initially}}
\newcommand{\Internal}{{\bf internal}}
\newcommand{\Invariant}{{\bf invariant}}
\newcommand{\Od}{{\bf od}}
\newcommand{\Of}{{\bf of}}
\newcommand{\Output}{{\bf output}}
\newcommand{\Partitioned}{{\bf partitioned}}
\newcommand{\Signature}{{\bf signature}}
\newcommand{\Simulation}{{\bf simulation}}
\newcommand{\Sort}{{\bf sort}}
\newcommand{\States}{{\bf states}}
\newcommand{\Tasks}{{\bf tasks}}
\newcommand{\Then}{{\bf then}}
\newcommand{\To}{{\bf to}}
\newcommand{\Trait}{{\bf trait}}
\newcommand{\Traits}{{\bf traits}}
\newcommand{\Transitions}{{\bf transitions}}
\newcommand{\Tuple}{{\bf tuple}}
\newcommand{\Type}{{\bf type}}
\newcommand{\Union}{{\bf union}}
\newcommand{\Uses}{{\bf uses}}
\newcommand{\Where}{{\bf where}}
\newcommand{\While}{{\bf while}}
\newcommand{\With}{{\bf with}}

\newcommand{\FFF}{\vspace{0.1in}}
\newcommand{\BBB}{\hspace{-0.1in}}

\newcommand{\deq}{\mathrel{\stackrel{\scriptscriptstyle\Delta}{=}}}


\newcommand{\seclabel}[1]{\label{sec:#1}}
\newcommand{\secref}[1]{Section~\ref{sec:#1}}
\newcommand{\secreftwo}[2]{Sections~\ref{sec:#1}~and~\ref{sec:#2}}
\newcommand{\figlabel}[1]{\label{fig:#1}}
\newcommand{\figref}[1]{Figure~\ref{fig:#1}}
\newcommand{\figrefs}[2]{Figures~\ref{fig:#1} and~\ref{fig:#2}}
\newcommand{\applabel}[1]{\label{app:#1}}
\newcommand{\appref}[1]{Appendix~\ref{app:#1}}
\newcommand{\lnlabel}[1]{\label{line:#1}}
\newcommand{\lnrngref}[2]{lines~\ref{line:#1}--\ref{line:#2}\xspace}
\newcommand{\lnref}[1]{line~\ref{line:#1}\xspace}
\newcommand{\thmref}[1]{Theorem~\ref{thm:#1}\xspace}


\newcommand{\remove}[1]{}
\newcommand{\salg}[1]{\relax\ifmmode {\mathcal F}_{#1}\else ${\mathcal F}_{#1}$\fi} 
\newcommand{\msp}[1]{\relax\ifmmode (#1, \salg{#1}) \else $(#1, \salg{#1})$\fi} 
\newcommand{\msprod}[2]{\relax\ifmmode ( #1 \times #2, \salg{#1} \otimes \salg{#2}) \else $(#1 \times #2, \salg{#1} \otimes \salg{#2})$\fi} 
\newcommand{\dist}[1]{\relax\ifmmode {\mathcal P}\msp{#1}
  \else ${\mathcal P}\msp{#1}$\fi} 
\newcommand{\subdist}[1]{\relax\ifmmode {\mathcal S}{\mathcal P}\msp{#1} 
  \else ${\mathcal S}{\mathcal P}\msp{#1}$\fi} 
\newcommand{\disc}[1]{\relax\ifmmode {\sf Disc}(#1)
  \else ${\sf Disc}(#1)$\fi} 

\newcommand{\Trajeq}{\relax\ifmmode {\mathcal R}_\T \else ${\mathcal R}_\T$\fi} 
\newcommand{\Acteq}{\relax\ifmmode {\mathcal R}_A \else ${\mathcal R}_A$\fi} 
\newcommand{\noop}{\relax\ifmmode \lambda \else $\lambda$\fi} 
\newcommand{\close}[1]{\relax\ifmmode \overline{#1} \else $\overline{#1}$\fi} 

\newcommand{\corrtasks}{\mathop{\mathsf {c}}}
\newcommand{\full}{\mathop{\mathsf {full}}}
\newcommand{\tdist}{\mathop{\mathsf {tdist}}}
\newcommand{\extbeh}{\mathop{\mathsf {extbeh}}}
\newcommand{\apply}[2]{\mathop{\mathsf {apply}({#1},{#2})}}
\newcommand{\support}{\mathop{\mathsf {supp}}}
\newcommand{\maxrng}{\mathop{\mathsf {max}}}
\newcommand{\relation}{\mathrel{R}}
\newcommand{\cone}{C}
\newcommand{\flatten}{\mathord{\mathsf {flatten}}}
\newcommand{\discrete}{\mathord{\mathsf {Disc}}}
\newcommand{\lift}[1]{\mathrel{{\mathcal L}(#1)}}
\newcommand{\expansion}[1]{\mathrel{{\mathcal E}(#1)}}


\newcommand{\subdisc}{\operatorname{\mathsf {SubDisc}}}
\newcommand{\tran}{\operatorname{\mathsf {tran}}}

\renewcommand{\execs}{{\operatorname{\mathsf {Execs}}}}
\newcommand{\frags}{{\operatorname{\mathsf {Frags}}}}

\newcommand{\tracefnc}{{\operatorname{\mathsf {trace}}}}

\newcommand{\finite}{{\operatorname{\mathsf {finite}}}}
\newcommand{\hide}{{\operatorname{\mathsf {hide}}}}

\newcommand{\early}{{\operatorname{\mathsf {Early}}}}
\newcommand{\late}{{\operatorname{\mathsf {Late}}}}
\newcommand{\toss}{{\operatorname{\mathsf {Toss}}}}

\newcommand{\define}{:=}

\newcommand{\pc}{{\operatorname{\mathsf {counter}}}}
\newcommand{\chosen}{{\operatorname{\mathsf {chosen}}}}

\newcommand{\rand}{{\operatorname{\mathsf {random}}}}
\newcommand{\unif}{{\operatorname{\mathsf {unif}}}}

\newcommand{\ie}{i.e.,\xspace}
\newcommand{\Ie}{I.e.,\xspace}

\newcommand{\eg}{e.g.,\xspace}
\newcommand{\Eg}{E.g.,\xspace}


\newcommand{\mybox}[3]{
  \framebox[#1][l]
  {
    \parbox{#2}
    {
      #3
    }
  }
}

\newcommand{\two}[4]{
  \parbox{.98\columnwidth}{\vspace{1pt} \vfill
    \parbox[t]{#1\columnwidth}{#3}%
    \hspace{.5cm}
    \parbox[t]{#2\columnwidth}{#4}%
  }}

\newcommand{\twosep}[4]{
  \parbox{\columnwidth}{\vspace{1pt} \vfill
    \parbox[t]{#1\columnwidth}{#3}%
        \vrule width 0.2pt
    \parbox[t]{#2\columnwidth}{#4}%
  }}

\newcommand{\eqntwo}[4]{
  \parbox{\columnwidth}{\vspace{1pt} \vfill
    \parbox[t]{#1\columnwidth}{$ #3 $}
    \parbox[t]{#2\columnwidth}{$ #4 $}
  }}

\newcommand{\three}[6]{\vspace{1pt} \vfill
        \parbox{\columnwidth}{%
        \parbox[t]{#1\columnwidth}{#4}%
        \parbox[t]{#2\columnwidth}{#5}%
        \parbox[t]{#3\columnwidth}{#6}%
      }}

\newcommand{\tup}[1]
           {
             \relax\ifmmode
             \langle #1 \rangle
             \else $\langle$ #1 $\rangle$ \fi
           }

\newcommand{\lit}[1]{ \relax\ifmmode
                \mathord{\mathcode`\-="702D\sf #1\mathcode`\-="2200}
                \else {\it #1} \fi }

\newcommand{\figuresize}{\scriptsize}

\newcommand{\equationsize}{\footnotesize}

\lstdefinelanguage{ioa}{
  basicstyle=\figuresize,
  keywordstyle=\bf \figuresize,
  identifierstyle=\it \figuresize,
  emphstyle=\tt \figuresize,
  mathescape=true,
  tabsize=20,
  sensitive=false,
  columns=fullflexible,
  keepspaces=false,
  flexiblecolumns=true,
  basewidth=0.05em,
  moredelim=[il][\rm]{//},
  moredelim=[is][\sf \figuresize]{!}{!},
  moredelim=[is][\bf \figuresize]{*}{*},
  keywords={automaton,and, 
         choose,const,continue, components,
         discrete, do, derived,
         eff, external,else, elseif, evolve, end,each,
         fi,for, forward, from,
         hidden,
         in,input,internal,if,invariant, initially, imports,
     let,
     or, output, operators, od, of,
     pre, prob,
     return,
     such,satisfies, stop, signature, simulation, state, stochastic,
     trajectories,trajdef, transitions, that,then, type, types, to, tasks,
     variables, vocabulary, uni,
     when,where, with,while},
  emph={set, seq, tuple, map, array, enumeration},   
   literate=
        {(}{{$($}}1
        {)}{{$)$}}1
        {\\in}{{$\in\ $}}1
        {\\preceq}{{$\preceq\ $}}1
        {\\subset}{{$\subset\ $}}1
        {\\subseteq}{{$\subseteq\ $}}1
        {\\supset}{{$\supset\ $}}1
        {\\supseteq}{{$\supseteq\ $}}1
        {\\forall}{{$\forall$}}1
        {\\le}{{$\le\ $}}1
        {\\ge}{{$\ge\ $}}1
        {\\gets}{{$\gets\ $}}1
        {\\cup}{{$\cup\ $}}1
        {\\cap}{{$\cap\ $}}1
        {\\langle}{{$\langle$}}1
        {\\rangle}{{$\rangle$}}1
        {\\exists}{{$\exists\ $}}1
        {\\bot}{{$\bot$}}1
        {\\rip}{{$\rip$}}1
        {\\emptyset}{{$\emptyset$}}1
        {\\notin}{{$\notin\ $}}1
        {\\not\\exists}{{$\not\exists\ $}}1
        {\\ne}{{$\ne\ $}}1
        {\\to}{{$\to\ $}}1
        {\\implies}{{$\implies\ $}}1
        {<}{{$<\ $}}1
        {>}{{$>\ $}}1
        {=}{{$=\ $}}1
        {~}{{$\neg\ $}}1
        {|}{{$\mid$}}1
        {'}{{$^\prime$}}1
        {\\A}{{$\forall\ $}}1
        {\\E}{{$\exists\ $}}1
        {\\/}{{$\vee\,$}}1
        {\\vee}{{$\vee\,$}}1
        {/\\}{{$\wedge\,$}}1
        {\\wedge}{{$\wedge\,$}}1
        {=>}{{$\Rightarrow\ $}}1
        {->}{{$\rightarrow\ $}}1
        {<=}{{$\Leftarrow\ $}}1
        {<-}{{$\leftarrow\ $}}1
        {~=}{{$\neq\ $}}1
        {\\U}{{$\cup\ $}}1
        {\\I}{{$\cap\ $}}1
        {|-}{{$\vdash\ $}}1
        {-|}{{$\dashv\ $}}1
        {<<}{{$\ll\ $}}2
        {>>}{{$\gg\ $}}2
        {||}{{$\|$}}1
        {[}{{$[$}}1
        {]}{{$\,]$}}1
        {[[}{{$\langle$}}1
        {]]]}{{$]\rangle$}}1
        {]]}{{$\rangle$}}1
        {<=>}{{$\Leftrightarrow\ $}}2
        {<->}{{$\leftrightarrow\ $}}2
        {(+)}{{$\oplus\ $}}1
        {(-)}{{$\ominus\ $}}1
        {_i}{{$_{i}$}}1
        {_j}{{$_{j}$}}1
        {_{i,j}}{{$_{i,j}$}}3
        {_{j,i}}{{$_{j,i}$}}3
        {_0}{{$_0$}}1
        {_1}{{$_1$}}1
        {_2}{{$_2$}}1
        {_n}{{$_n$}}1
        {_p}{{$_p$}}1
        {_k}{{$_n$}}1
        {-}{{$\ms{-}$}}1
        {@}{{}}0
        {\\delta}{{$\delta$}}1
        {\\R}{{$\R$}}1
        {\\Rplus}{{$\Rplus$}}1
        {\\N}{{$\N$}}1
        {\\times}{{$\times\ $}}1
        {\\tau}{{$\tau$}}1
        {\\alpha}{{$\alpha$}}1
        {\\beta}{{$\beta$}}1
        {\\gamma}{{$\gamma$}}1
        {\\ell}{{$\ell\ $}}1
        {--}{{$-\ $}}1
        {\\TT}{{\hspace{1.5em}}}3        
      }

\lstdefinelanguage{ioaNums}[]{ioa}
{
  numbers=left,
  numberstyle=\tiny,
  stepnumber=2,
  numbersep=4pt
}

\lstdefinelanguage{ioaNumsRight}[]{ioa}
{
  numbers=right,
  numberstyle=\tiny,
  stepnumber=2,
  numbersep=4pt
}

\newcommand{\ioa}{\lstinline[language=IOA]}

\lstnewenvironment{IOA}%
  {\lstset{language=IOA}}
  {}

\lstnewenvironment{IOANums}%
  {
  \if@firstcolumn
    \lstset{language=IOA, numbers=left, firstnumber=auto}
  \else
    \lstset{language=IOA, numbers=right, firstnumber=auto}
  \fi
  }
  {}

\lstnewenvironment{IOANumsRight}%
  {
    \lstset{language=IOA, numbers=right, firstnumber=auto}
  }
  {}


\newcommand{\figioa}[5]{
  \begin{figure}[#1]
      \hrule \F
      {\figuresize \bf #2}
      \lstinputlisting[language=ioaLang]{#5}
      \F \hrule \F
      \caption{#3}
      \label{fig: #4}
  \end{figure}
}

\newcommand{\linefigioa}[9]{

}

\newcommand{\twofigioa}[8]{
  \begin{figure}[#1]
    \hrule \F
    {\figuresize \bf #2} \\
    \two{#5}{#6}
    {
      \lstinputlisting[language=ioaLang]{#7}
    }
    {
      \lstinputlisting[language=ioaLang]{#8}
    }
    \F \hrule \F
    \caption{#3}
    \label{fig: #4}
  \end{figure}
}

\lstdefinelanguage{ioaLang}{%
  basicstyle=\ttfamily\small,
  keywordstyle=\rmfamily\bfseries\small,
  identifierstyle=\small,
  keywords={assumes,automaton,axioms,backward,bounds,by,case,choose,components,const,d,det,discrete,do,eff,else,elseif,ensuring,enumeration,evolve,fi,fire,follow,for,forward,from,hidden,if,in,%
    input,initially,internal,invariant,let, local,od,of,output,pre,schedule,signature,so,%
    simulation,states,state,variables, tasks, stop,tasks,that,then,to,trajdef,trajectory,trajectories,transitions,tuple,type,
    uniform,union,urgent,uses,when,where,while,yield},
  literate=
        {\\in}{{$\in$}}1
        {\\preceq}{{$\preceq$}}1
        {\\subset}{{$\subset$}}1
        {\\subseteq}{{$\subseteq$}}1
        {\\supset}{{$\supset$}}1
        {\\supseteq}{{$\supseteq$}}1
        {\\rho}{{$\rho$}}1
        {\\infty}{{$\infty$}}1
        {<}{{$<$}}1
        {>}{{$>$}}1
        {=}{{$=$}}1
        {~}{{$\neg$}}1 
        {|}{{$\mid$}}1
        {'}{{$^\prime$}}1
        {\\A}{{$\forall$}}1 {\\E}{{$\exists$}}1
        {\\/}{{$\vee$}}1 {/\\}{{$\wedge$}}1 
        {=>}{{$\Rightarrow$}}1 
        {->}{{$\rightarrow$}}1 
        {<=}{{$\leq$}}1 {>=}{{$\geq$}}1 {~=}{{$\neq$}}1
        {\\U}{{$\cup$}}1 {\\I}{{$\cap$}}1
        {|-}{{$\vdash$}}1 {-|}{{$\dashv$}}1
        {<<}{{$\ll$}}2 {>>}{{$\gg$}}2
        {||}{{$\|$}}1
        {<=>}{{$\Leftrightarrow$}}2 
        {<->}{{$\leftrightarrow$}}2
        {(+)}{{$\oplus$}}1
        {(-)}{{$\ominus$}}1
}

\lstdefinelanguage{bigIOALang}{%
  basicstyle=\ttfamily,
  keywordstyle=\rmfamily\bfseries,
  identifierstyle=,
  keywords={assumes,automaton,axioms,backward,by,case,choose,components,const,%
    d,det,discrete,do,eff,else,elseif,ensuring,enumeration,evolve,fi,for,forward,from,hidden,if,in%
    input,initially,internal,invariant,local,od,of,output,pre,schedule,signature,so,%
    tasks, simulation,states,stop,tasks,that,then,to,trajdef,trajectories,transitions,tuple,type,union,urgent,uses,when,where,yield},
  literate=
        {\\in}{{$\in$}}1
        {\\preceq}{{$\preceq$}}1
        {\\subset}{{$\subset$}}1
        {\\subseteq}{{$\subseteq$}}1
        {\\supset}{{$\supset$}}1
        {\\supseteq}{{$\supseteq$}}1
        {<}{{$<$}}1
        {>}{{$>$}}1
        {=}{{$=$}}1
        {~}{{$\neg$}}1 
        {|}{{$\mid$}}1
        {'}{{$^\prime$}}1
        {\\A}{{$\forall$}}1 {\\E}{{$\exists$}}1
        {\\/}{{$\vee$}}1 {/\\}{{$\wedge$}}1 
        {=>}{{$\Rightarrow$}}1 
        {->}{{$\rightarrow$}}1 
        {<=}{{$\leq$}}1 {>=}{{$\geq$}}1 {~=}{{$\neq$}}1
        {\\U}{{$\cup$}}1 {\\I}{{$\cap$}}1
        {|-}{{$\vdash$}}1 {-|}{{$\dashv$}}1
        {<<}{{$\ll$}}2 {>>}{{$\gg$}}2
        {||}{{$\|$}}1
        {<=>}{{$\Leftrightarrow$}}2 
        {<->}{{$\leftrightarrow$}}2
        {(+)}{{$\oplus$}}1
        {(-)}{{$\ominus$}}1
}

\lstnewenvironment{BigIOA}%
  {\lstset{language=bigIOALang,basicstyle=\ttfamily}
   \csname lst@SetFirstLabel\endcsname}
  {\csname lst@SaveFirstLabel\endcsname\vspace{-4pt}\noindent}

\lstnewenvironment{SmallIOA}%
  {\lstset{language=ioaLang,basicstyle=\ttfamily\scriptsize}
   \csname lst@SetFirstLabel\endcsname}
  {\csname lst@SaveFirstLabel\endcsname\noindent}

\newcommand{\true}{\relax\ifmmode \mathit true \else \em true \/\fi}
\newcommand{\false}{\relax\ifmmode \mathit false \else \em false \/\fi}

\newcommand{\Real}{{\operatorname{\texttt{Real}}}}
\newcommand{\Bool}{{\operatorname{\texttt{Bool}}}}
\newcommand{\Char}{{\operatorname{\texttt{Char}}}}
\newcommand{\ioaInt}{{\operatorname{\texttt{Int}}}}
\newcommand{\ioaNat}{{\operatorname{\texttt{Nat}}}}
\newcommand{\ioaAugR}{{\operatorname{\texttt{AugmentedReal}}}}
\newcommand{\ioaString}{{\operatorname{\texttt{String}}}}
\newcommand{\Discrete}{{\operatorname{\texttt{Discrete}}}}
\newcommand{\limplies}{\Rightarrow}
\newcommand{\liff}{\Leftrightarrow}

\newlength{\bracklen}
\newcommand{\sem}[1]{\settowidth{\bracklen}{[}
     [\hspace{-0.5\bracklen}[#1]\hspace{-0.5\bracklen}]}

\newcommand{\defaultArraystretch}{1.4}
\renewcommand{\arraystretch}{\defaultArraystretch}

\newcommand{\gS}{\mathcal{S}}
\newcommand{\gV}{\mathcal{V}}
\newcommand{\freevars}{\mathcal{FV}}

\newcommand{\gVspec}{\mathcal{V}_\mathit{spec}}
\newcommand{\gVa}{\mathcal{V}_\mathit{A}}
\newcommand{\gVsig}{\mathcal{V}_\mathit{sigs}}
\newcommand{\gVso}{\mathcal{V}_\mathit{sorts}}
\newcommand{\gVop}{\mathcal{V}_\mathit{ops}}
\newcommand{\sort}{\mathit{sort}}
\newcommand{\sig}{\mathit{sig}}
\newcommand{\id}{\mathit{id}}
\newcommand{\sigsep}{\lsl`->`}

\newcommand{\super}[2]{\ensuremath{\mathit{#1}^\mathit{#2}}}
\newcommand{\tri}[3]{\ensuremath{\mathit{#1}^\mathit{#2}_\mathit{#3}}}
\newcommand{\Assumptions}{\ensuremath{\mathit{Assumptions}}}
\newcommand{\actPred}[3][\pi]{\tri{P}{#2,#1}{#3}}
\newcommand{\actualTypes}[1]{\super{actualTypes}{#1}}
\newcommand{\actuals}[1]{\super{actuals}{#1}}
\newcommand{\autActVars}[2][\pi]{\vars{#2}{},\vars{#2,#1}{}}
\newcommand{\bracket}[2]{\mathit{#1}[\mathit{#2}]}
\newcommand{\compVars}[1]{\super{compVars}{#1}}
\newcommand{\context}{\mathit{context}}
\newcommand{\ensuring}[2]{\tri{ensuring}{#1}{#2}}
\newcommand{\initPred}[1]{\tri{P}{#1}{init}}
\newcommand{\initVals}[1]{\super{initVals}{#1}}
\newcommand{\initially}[2]{\tri{initially}{#1}{#2}}
\newcommand{\invPred}[2]{\tri{Inv}{#1}{#2}}
\newcommand{\knownVars}[1]{\super{knownVars}{#1}}
\newcommand{\localPostVars}[2]{\tri{localPostVars}{#1}{#2}}
\newcommand{\localVars}[2]{\tri{localVars}{#1}{#2}}
\newcommand{\locals}[1]{\bracket{Locals}{#1}}
\newcommand{\nam}[1]{\rho^{\mathit{#1}}}
\newcommand{\otherActPred}[3][\pi]{\otherTri{P}{#2,#1}{#3}}
\newcommand{\otherParams}[2]{\otherTri{params}{#1}{#2}}
\newcommand{\otherSub}[2]{\otherTri{\sigma}{#1}{#2}}
\newcommand{\otherTri}[3]{\tri{\smash{#1'}}{#2}{#3}}
\newcommand{\otherVars}[2]{\otherTri{vars}{#1}{#2}}
\newcommand{\params}[2]{\tri{params}{#1}{#2}}
\newcommand{\postVars}[1]{\super{postVars}{#1}}
\newcommand{\pre}[2]{\tri{Pre}{#1}{#2}}
\newcommand{\prog}[2]{\tri{Prog}{#1}{#2}}
\newcommand{\prov}[2]{\tri{Prov}{#1}{#2}}
\newcommand{\stateSorts}[1]{\super{stateSorts}{#1}}
\newcommand{\stateVars}[1]{\super{stateVars}{#1}}
\newcommand{\states}[1]{\bracket{States}{#1}}
\newcommand{\sub}[2]{\tri{\sigma}{#1}{#2}}
\newcommand{\sugActPred}[3][\pi]{\tri{P}{#2,#1}{#3,desug}}
\newcommand{\sugLocalVars}[2]{\ifthenelse{\equal{}{#2}}%
                             {\tri{localVars}{#1}{desug}}%
                             {\tri{localVars}{#1}{#2,desug}}}
\newcommand{\sugVars}[2]{\ifthenelse{\equal{}{#2}}%
                        {\tri{vars}{#1}{desug}}%
                        {\tri{vars}{#1}{#2,desug}}}
\newcommand{\cVars}[1]{\super{cVars}{#1}}

\newcommand{\vmap}{\dot{\varrho}}
\newcommand{\map}[2]{\tri{\vmap}{#1}{#2}}

\newcommand{\types}[1]{\super{types}{#1}}
\newcommand{\vars}[2]{\tri{vars}{#1}{#2}}

\newcommand{\subActPred}[3][\pi]{\sub{#2,#1}{#3}(\tri{P}{#2,#1}{#3,desug})}
\newcommand{\subLocalVars}[2]{\sub{#1}{#2}(\tri{localVars}{#1}{#2,desug})}

\newcommand{\dA}{\hat{A}}
\newcommand{\renameAction}[1]{\ensuremath{\rho_{#1}(\vars{\dA{#1},\pi}{})}}
\newcommand{\renameComponent}[1]{\ensuremath{\rho_{#1}\dA_{#1}}}

\newenvironment{Syntax}{\[\begin{subSyntax}}{\end{subSyntax}\]\vspace{-.3in}}
\newenvironment{subSyntax}{\begin{array}{l}}{\end{array}}
\newcommand{\w}[1]{\mbox{\hspace*{#1em}}}

\newcommand{\ms}[1]{\ifmmode%
\mathord{\mathcode`-="702D\it #1\mathcode`\-="2200}\else%
$\mathord{\mathcode`-="702D\it #1\mathcode`\-="2200}$\fi}


\newcommand{\kw}[1]{{\bf #1}} 
\newcommand{\tcon}[1]{{\tt #1}} 
\newcommand{\syn}[1]{{\tt #1}} 
\newcommand{\pvskw}[1]{{\sc #1}} 
\newcommand{\pvsid}[1]{{\operatorname{\mathit{#1}}}}

\def\A{{\cal A}} 
\def\B{{\cal B}} 
\def\D{{\cal D}} 
\def\T{{\cal T}} 

\newcommand{\vu}{{\bf u}}
\newcommand{\vv}{{\bf v}}
\newcommand{\vw}{{\bf w}}
\newcommand{\vx}{{\bf x}}
\newcommand{\vy}{{\bf y}}
\newcommand{\va}{{\bf a}}
\newcommand{\vb}{{\bf b}}
\newcommand{\vq}{{\bf q}}
\newcommand{\vs}{{\bf s}}
\newcommand{\vm}{{\bf m}}
\newcommand{\vp}{{\bf p}}

\newcommand{\arrow}[1]{\mathrel{\stackrel{#1}{\rightarrow}}}
\newcommand{\sarrow}[2]{\mathrel{\stackrel{#1}{\rightarrow_{#2}}}}
\newcommand{\concat}{\mathbin{^{\frown}}} 
\newcommand{\paste}{\mathrel{\diamond}}
\newcommand{\Arrow}[1]{\mathrel{\stackrel{#1}{\Longrightarrow}}}


\newcommand{\carlane}[1]{\mathit{lane({#1})}}
\newcommand{\carnext}[1]{\mathit{next({#1})}}
\newcommand{\carprev}[1]{\mathit{prev({#1})}}
\newcommand{\carleft}[1]{\mathit{left({#1})}}
\newcommand{\carright}[1]{\mathit{right({#1})}}
\newcommand{\carego}{e}

\newcommand{\relative}[1]{\mathit{rel}(#1)}

\newcommand{\und}{{\tt U}} 

\def\CC{{\mathscr C}} 


\lstdefinelanguage{pvs}{
  basicstyle=\tt \figuresize,
  keywordstyle=\sc \figuresize,
  identifierstyle=\it \figuresize,
  emphstyle=\tt \figuresize,
  mathescape=true,
  tabsize=20,
  sensitive=false,
  columns=fullflexible,
  keepspaces=false,
  flexiblecolumns=true,
  basewidth=0.05em,
  moredelim=[il][\rm]{//},
  moredelim=[is][\sf \figuresize]{!}{!},
  moredelim=[is][\bf \figuresize]{*}{*},
  keywords={and, 
         begin,
         cases, const,
         do,
         external, else, exists, end, endcases, endif,
         fi,for, forall, from,
         hidden,
         in, if, importing,
     let, lambda, lemma,
     measure, 
     not,
     or, of,
     return, recursive,
     stop, 
     theory, that,then, type, types, type+, to, theorem,
     var,
     with,while},
  emph={nat, setof, sequence, eq, tuple, map, array, enumeration, bool, real, exp, nnreal, posreal},   
   literate=
        {(}{{$($}}1
        {)}{{$)$}}1
        {\\in}{{$\in\ $}}1
        {\\mapsto}{{$\rightarrow\ $}}1
        {\\preceq}{{$\preceq\ $}}1
        {\\subset}{{$\subset\ $}}1
        {\\subseteq}{{$\subseteq\ $}}1
        {\\supset}{{$\supset\ $}}1
        {\\supseteq}{{$\supseteq\ $}}1
        {\\forall}{{$\forall$}}1
        {\\le}{{$\le\ $}}1
        {\\ge}{{$\ge\ $}}1
        {\\gets}{{$\gets\ $}}1
        {\\cup}{{$\cup\ $}}1
        {\\cap}{{$\cap\ $}}1
        {\\langle}{{$\langle$}}1
        {\\rangle}{{$\rangle$}}1
        {\\exists}{{$\exists\ $}}1
        {\\bot}{{$\bot$}}1
        {\\rip}{{$\rip$}}1
        {\\emptyset}{{$\emptyset$}}1
        {\\notin}{{$\notin\ $}}1
        {\\not\\exists}{{$\not\exists\ $}}1
        {\\ne}{{$\ne\ $}}1
        {\\to}{{$\to\ $}}1
        {\\implies}{{$\implies\ $}}1
        {<}{{$<\ $}}1
        {>}{{$>\ $}}1
        {=}{{$=\ $}}1
        {~}{{$\neg\ $}}1
        {|}{{$\mid$}}1
        {'}{{$^\prime$}}1
        {\\A}{{$\forall\ $}}1
        {\\E}{{$\exists\ $}}1
        {\\/}{{$\vee\,$}}1
        {\\vee}{{$\vee\,$}}1
        {/\\}{{$\wedge\,$}}1
        {\\wedge}{{$\wedge\,$}}1
        {->}{{$\rightarrow\ $}}1
        {=>}{{$\Rightarrow\ $}}1
        {->}{{$\rightarrow\ $}}1
        {<=}{{$\Leftarrow\ $}}1
        {<-}{{$\leftarrow\ $}}1
        {~=}{{$\neq\ $}}1
        {\\U}{{$\cup\ $}}1
        {\\I}{{$\cap\ $}}1
        {|-}{{$\vdash\ $}}1
        {-|}{{$\dashv\ $}}1
        {<<}{{$\ll\ $}}2
        {>>}{{$\gg\ $}}2
        {||}{{$\|$}}1
        {[}{{$[$}}1
        {]}{{$\,]$}}1
        {[[}{{$\langle$}}1
        {]]]}{{$]\rangle$}}1
        {]]}{{$\rangle$}}1
        {<=>}{{$\Leftrightarrow\ $}}2
        {<->}{{$\leftrightarrow\ $}}2
        {(+)}{{$\oplus\ $}}1
        {(-)}{{$\ominus\ $}}1
        {_i}{{$_{i}$}}1
        {_j}{{$_{j}$}}1
        {_{i,j}}{{$_{i,j}$}}3
        {_{j,i}}{{$_{j,i}$}}3
        {_0}{{$_0$}}1
        {_1}{{$_1$}}1
        {_2}{{$_2$}}1
        {_n}{{$_n$}}1
        {_p}{{$_p$}}1
        {_k}{{$_n$}}1
        {-}{{$\ms{-}$}}1
        {@}{{}}0
        {\\delta}{{$\delta$}}1
        {\\R}{{$\R$}}1
        {\\Rplus}{{$\Rplus$}}1
        {\\N}{{$\N$}}1
        {\\times}{{$\times\ $}}1
        {\\tau}{{$\tau$}}1
        {\\alpha}{{$\alpha$}}1
        {\\beta}{{$\beta$}}1
        {\\gamma}{{$\gamma$}}1
        {\\ell}{{$\ell\ $}}1
        {--}{{$-\ $}}1
        {\\TT}{{\hspace{1.5em}}}3        
      }

\lstdefinelanguage{BigPVS}{
  basicstyle=\tt,
  keywordstyle=\sc,
  identifierstyle=\it,
  emphstyle=\tt ,
  mathescape=true,
  tabsize=20,
  sensitive=false,
  columns=fullflexible,
  keepspaces=false,
  flexiblecolumns=true,
  basewidth=0.05em,
  moredelim=[il][\rm]{//},
  moredelim=[is][\sf \figuresize]{!}{!},
  moredelim=[is][\bf \figuresize]{*}{*},
  keywords={and, 
         begin,
         cases, const,
         do, datatype,
         external, else, exists, end, endif, endcases,
         fi,for, forall, from,
         hidden,
         in, if, importing,
     let, lambda, lemma,
     measure,
     not,
     or, of,
     return, recursive,
     stop, 
     theory, that,then, type, types, type+, to, theorem,
     var,
     with,while},
  emph={nat, setof, sequence, eq, tuple, map, array, first, rest, add, enumeration, bool, real, posreal, nnreal},   
   literate=
        {(}{{$($}}1
        {)}{{$)$}}1
        {\\in}{{$\in\ $}}1
        {\\mapsto}{{$\rightarrow\ $}}1
        {\\preceq}{{$\preceq\ $}}1
        {\\subset}{{$\subset\ $}}1
        {\\subseteq}{{$\subseteq\ $}}1
        {\\supset}{{$\supset\ $}}1
        {\\supseteq}{{$\supseteq\ $}}1
        {\\forall}{{$\forall$}}1
        {\\le}{{$\le\ $}}1
        {\\ge}{{$\ge\ $}}1
        {\\gets}{{$\gets\ $}}1
        {\\cup}{{$\cup\ $}}1
        {\\cap}{{$\cap\ $}}1
        {\\langle}{{$\langle$}}1
        {\\rangle}{{$\rangle$}}1
        {\\exists}{{$\exists\ $}}1
        {\\bot}{{$\bot$}}1
        {\\rip}{{$\rip$}}1
        {\\emptyset}{{$\emptyset$}}1
        {\\notin}{{$\notin\ $}}1
        {\\not\\exists}{{$\not\exists\ $}}1
        {\\ne}{{$\ne\ $}}1
        {\\to}{{$\to\ $}}1
        {\\implies}{{$\implies\ $}}1
        {<}{{$<\ $}}1
        {>}{{$>\ $}}1
        {=}{{$=\ $}}1
        {~}{{$\neg\ $}}1
        {|}{{$\mid$}}1
        {'}{{$^\prime$}}1
        {\\A}{{$\forall\ $}}1
        {\\E}{{$\exists\ $}}1
        {\\/}{{$\vee\,$}}1
        {\\vee}{{$\vee\,$}}1
        {/\\}{{$\wedge\,$}}1
        {\\wedge}{{$\wedge\,$}}1
        {->}{{$\rightarrow\ $}}1
        {=>}{{$\Rightarrow\ $}}1
        {->}{{$\rightarrow\ $}}1
        {<=}{{$\Leftarrow\ $}}1
        {<-}{{$\leftarrow\ $}}1
        {~=}{{$\neq\ $}}1
        {\\U}{{$\cup\ $}}1
        {\\I}{{$\cap\ $}}1
        {|-}{{$\vdash\ $}}1
        {-|}{{$\dashv\ $}}1
        {<<}{{$\ll\ $}}2
        {>>}{{$\gg\ $}}2
        {||}{{$\|$}}1
        {[}{{$[$}}1
        {]}{{$\,]$}}1
        {[[}{{$\langle$}}1
        {]]]}{{$]\rangle$}}1
        {]]}{{$\rangle$}}1
        {<=>}{{$\Leftrightarrow\ $}}2
        {<->}{{$\leftrightarrow\ $}}2
        {(+)}{{$\oplus\ $}}1
        {(-)}{{$\ominus\ $}}1
        {_i}{{$_{i}$}}1
        {_j}{{$_{j}$}}1
        {_{i,j}}{{$_{i,j}$}}3
        {_{j,i}}{{$_{j,i}$}}3
        {_0}{{$_0$}}1
        {_1}{{$_1$}}1
        {_2}{{$_2$}}1
        {_n}{{$_n$}}1
        {_p}{{$_p$}}1
        {_k}{{$_n$}}1
        {-}{{$\ms{-}$}}1
        {@}{{}}0
        {\\delta}{{$\delta$}}1
        {\\R}{{$\R$}}1
        {\\Rplus}{{$\Rplus$}}1
        {\\N}{{$\N$}}1
        {\\times}{{$\times\ $}}1
        {\\tau}{{$\tau$}}1
        {\\alpha}{{$\alpha$}}1
        {\\beta}{{$\beta$}}1
        {\\gamma}{{$\gamma$}}1
        {\\ell}{{$\ell\ $}}1
        {--}{{$-\ $}}1
        {\\TT}{{\hspace{1.5em}}}3        
      }

\lstdefinelanguage{pvsNums}[]{pvs}
{
  numbers=left,
  numberstyle=\tiny,
  stepnumber=2,
  numbersep=4pt
}

\lstdefinelanguage{pvsNumsRight}[]{pvs}
{
  numbers=right,
  numberstyle=\tiny,
  stepnumber=2,
  numbersep=4pt
}

\newcommand{\pvs}{\lstinline[language=PVS]}

\lstnewenvironment{BigPVS}%
  {\lstset{language=BigPVS}}
  {}

\lstnewenvironment{PVSNums}%
  {
  \if@firstcolumn
    \lstset{language=pvs, numbers=left, firstnumber=auto}
  \else
    \lstset{language=pvs, numbers=right, firstnumber=auto}
  \fi
  }
  {}

\lstnewenvironment{PVSNumsRight}%
  {
    \lstset{language=pvs, numbers=right, firstnumber=auto}
  }
  {}

\newcommand{\figpvs}[5]{
  \begin{figure}[#1]
      \hrule \F
      {\figuresize \bf #2}
      \lstinputlisting[language=pvs]{#5}
      \F \hrule \F
      \caption{#3}
      \label{fig: #4}
  \end{figure}
}

\newcommand{\linefigpvs}[9]{

}

\newcommand{\twofigpvs}[8]{
  \begin{figure}[#1]
    \hrule \F
    {\figuresize \bf #2} \\
    \two{#5}{#6}
    {
      \lstinputlisting[language=pvsLang]{#7}
    }
    {
      \lstinputlisting[language=pvsLang]{#8}
    }
    \F \hrule \F
    \caption{#3}
    \label{fig: #4}
  \end{figure}
}

\lstdefinelanguage{pvsproof}{
  basicstyle=\tt \figuresize,
  mathescape=true,
  tabsize=4,
  sensitive=false,
  columns=fullflexible,
  keepspaces=false,
  flexiblecolumns=true,
  basewidth=0.05em,
}


\newcommand{\contract}{OV\xspace}
\newcommand{\Contracts}{OVs\xspace}
\newcommand{\contracts}{{\bf OV}\xspace}
\newcommand{\ID}{\ensuremath{\mathit{ID}}}
\newcommand{\asmap}{\ensuremath{\mathit{map}}}

\newtheorem{assumption}{Assumption}{\bfseries}{\itshape}
\newtheorem{definition}{Definition}{\bfseries}{\itshape}
\newtheorem{proposition}{Proposition}{\bfseries}{\itshape}
\newtheorem{lemma}{Lemma}{\bfseries}{\itshape}

\newcommand{\propref}[1]{Proposition~\ref{prop:#1}\xspace}
\newcommand{\proplabel}[1]{\label{prop:#1}}
\newcommand{\lemref}[1]{Lemma~\ref{lem:#1}\xspace}
\newcommand{\lemlabel}[1]{\label{lem:#1}}

\newcommand{\tabref}[1]{Table~\ref{table:#1}}
\newcommand{\tablabel}[1]{\label{table:#1}}

\newcommand{\AirMgr}{\textit{AM}\xspace}
\newcommand{\agent}{\textit{agent}\xspace}
\newcommand{\Sys}{\textit{Sys}\xspace}
\newcommand{\shift}{\mathit{shift}}
\newcommand{\protocol}{\mathtt{protocol}}
\newcommand{\bound}{\mathit{bound}}
\newcommand{\dur}{\mathit{dur}}
\newcommand{\dest}{\mathit{dest}}
\newcommand{\len}{\mathit{len}}
\newcommand{\last}{\mathit{last}}
\newcommand{\all}{\mathit{all}}
\newcommand{\AccAgents}{\mathit{AccAgents}}

\newcommand{\opervol}{OV\xspace}
\newcommand{\opervols}{OVs\xspace}

\newcommand{\Conse}{\textsc{Conservative}\xspace}
\newcommand{\Aggre}{\textsc{Aggressive}\xspace}

\setlength{\abovecaptionskip}{0pt}
\setlength{\belowcaptionskip}{0pt}
\setlength{\abovedisplayskip}{3pt}
\setlength{\belowdisplayskip}{3pt}
\setlength{\textfloatsep}{7pt}

\title{\utmtool: A Toolkit for Simulation and Verification of Unmanned Air-Traffic Management Systems (Extended Version)}

\author{%
    \authorblockN{Chiao Hsieh,
        Hussein Sibai,
        Hebron Taylor,
        Yifeng Ni,
        and
        Sayan Mitra
    }\\
    \authorblockA{
        University of Illinois at Urbana-Champaign\\
        Email: \{chsieh16,sibai2,hdt2,yifengn2,mitras\}@illinois.edu
    }
}

\maketitle
\thispagestyle{empty}
\pagestyle{empty}

\begin{abstract}
The key concept for safe and efficient traffic management for Unmanned Aircraft Systems (UAS) is the notion of \emph{operation volume} (OV).
An OV is a 4-dimensional block of airspace and time, which  can express an aircraft's \emph{intent},
and can be used for planning, de-confliction, and traffic management.
While there are several high-level simulators for UAS Traffic Management (UTM),
we are lacking a framework for creating, manipulating, and reasoning about OVs for heterogeneous air vehicles.
In this paper, we address this  and present  \utmtool---a software toolkit for simulation and verification of UTM scenarios based on \opervols.
First, we illustrate a use case of \utmtool by presenting a specific air traffic coordination protocol.
This protocol communicates OVs between  participating aircraft and an airspace manager for traffic routing.
We show how existing formal verification tools, \emph{Dafny} and \emph{Dione}, can assist in automatically checking key properties of the protocol.
Second, we show how the OVs can be computed for heterogeneous air vehicles like quadcopters and fixed-wing aircraft using another verification technique, namely \emph{reachability analysis}.
Finally, we show that \utmtool can be used to simulate complex scenarios involving heterogeneous vehicles,
for testing and performance evaluation in terms of workload and response delays analysis.
Our experiments delineate the trade-off between performance and workload across different strategies for generating \opervols.
\end{abstract}

\section{Introduction}
\label{sec:intro}

\emph{Unmanned Aircraft Traffic Management (UTM)} is an ecosystem of technologies that aim to enable unmanned, autonomous and human-operated, air vehicles to be used for transportation, delivery, and surveillance.
By 2024, 1.48 million recreational and 828 thousand  commercial unmanned aircraft are expected to be flying in the US national airspace~\cite{faa_forecast}.
Unlike the commercial airspace, this emerging area will have to accommodate heterogeneous and innovative vehicles relying on real-time distributed coordination, federated enforcement of regulations, and lightweight training for safety.
NASA, FAA, and a number of corporations are vigorously developing various UTM concepts, use cases, information architectures, and protocols towards the envisioned future where a large number of autonomous air vehicles can safely operate beyond visual line-of-sight.

FAA's UTM ConOps~\cite{utm_conops} defines the basic principles for safe coordination in UTM and the roles and responsibilities for the different parties involved such as the vehicle operator, manufacturer, the airspace service provider, and the FAA.
The building-block concept in UTM is the notion of \emph{operation volumes (\opervols)} which are used to share \emph{intent information} that, in turn,
enables interactive planning and strategic de-confliction for multiple UAS~\cite{utm_conops}.
Roughly, \opervols are 4D blocks of airspace with time intervals. They are used to specify the space that UAS is allowed to occupy over an interval of time (see Figures~\ref{fig:vehicles} and~\ref{fig:citysim}).
While there have been small-scale field tests for UTM protocols using \opervols~\cite{upp_summary},
there remains a strong need for a general-purpose framework for simulating and verifying UTM protocols based on \opervols.
Such a framework will need to
\begin{inparaenum}[(i)]
\item manipulate and communicate \opervols for traffic management protocols,
\item reason about dynamic \opervols for establishing safety of the protocols,
\item compute \opervols for heterogeneous air vehicles performing different maneuvers, and
\item evaluate UTM protocols in different simulation environments.
\end{inparaenum}

In this paper, we address this need and present \emph{\utmtool---an open source toolkit for simulation and verification of UTM scenarios}.
The toolkit offers a framework that
\begin{inparaenum}[(i)]
\item provides automata theory-based APIs for designing UTM protocols that formalize the communication of \opervols,
\item integrates existing tools, Dafny and Dione, to assist in verifying the safety and liveness of the protocols,
\item uses the reachability analysis tool DryVR to compute \opervols for heterogeneous air vehicles, and
\item expands the ROS and Gazebo-based CyPhyHouse framework~\cite{cyphyhouse_icra2020} to simulate and evaluate configurable UTM scenarios.
\end{inparaenum}
Benefit from~\cite{cyphyhouse_icra2020}, the protocols can be ported from simulations to hardware implementations.
The detailed contributions of \utmtool are as follows:

\paragraph*{Provably safe De-conflicting using \opervols}
For the first time we show how the intention expressed as \opervols can ensure provably safe distributed de-conflicting in Sections~\ref{sec:contract} and \ref{sec:protocol}.
As an example, we develop an automata-based de-conflicting protocol using \utmtool APIs.
This protocol specifies how the participating agents, the air vehicles, should interact with the Airspace Manager (\AirMgr).
We then formally verify the safety and liveness of this protocol.
In general, verification of distributed algorithms is challenging, but
our safety analysis shows that the use of \opervols helps decompose the global de-conflicting of the UAS into local invariant on the \AirMgr and local real-time requirements on each agent.
We further show that Dione~\cite{dione2019}, a proof assistant for Input/Output Automata (IOA) built with the Dafny program analyzer~\cite{leino_dafny:_2010}, can prove the local invariant on the \AirMgr automatically.
We prove that the safety of the protocol is achieved when individual agents follow their declared \opervols.
The liveness analysis further shows that every agent can eventually find a non-conflicting \opervol, under a stricter set of assumptions.

\paragraph*{Reachability Analysis for \opervol Conformance}
The guarantees of our protocol relies on the assumption that the agents do not violate their declared \opervols.
In~\secref{heuristics}, we show how to use an existing data-driven reachability analysis tool, DryVR~\cite{FanQMVCAV017}, to
create \opervols for heterogeneous air vehicles with low violation probability.
We apply such analysis on a quadrotor model, Hector Quadrotor~\cite{hector_quadrotor},
and a fixed-wing aircraft model, ROSplane~\cite{ROSplane}, and incorporate them in \utmtool.
We show both air vehicles in Figure~\ref{fig:vehicles} and visualize their \opervols for a landing scenario in Figure~\ref{fig:citysim}.

\paragraph*{Performance Evaluation}
In~\secref{experiment}, we first discuss the implementation of  \utmtool. Then, we
perform a detailed empirical analysis of our protocol in a number of representative scenarios using \utmtool. We compare two strategies for the generation of \opervols with different aggressiveness, namely \Conse and \Aggre.
Our experiments quantify the performance and workload on the \AirMgr,
and we measure these metrics with respect to the number of participating agents and different strategies for generating \opervols.
Our results suggest that the workload on the \AirMgr scales linearly with the number of agents,
and \Aggre provides 1.5-3X speedup but leads to 2-5X increased workload on the \AirMgr.

\begin{figure}[t!]
    \centering
    \includegraphics[width=0.49\columnwidth]{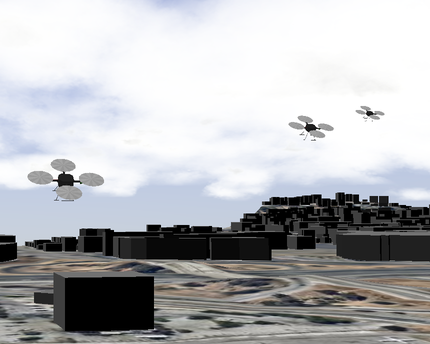}
    \hfill
    \includegraphics[width=0.49\columnwidth]{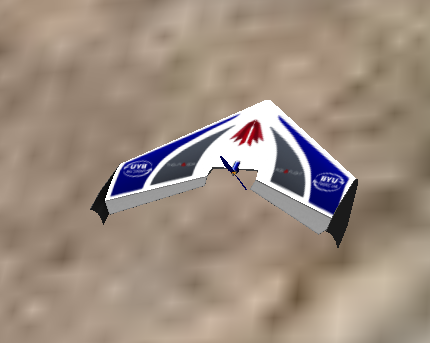}

    \caption{\small Hector Quadrotor~\cite{hector_quadrotor}~(\emph{Left}) and ROSplane~\cite{ROSplane}~(\emph{Right}) models in Gazebo simulator.}\label{fig:vehicles}
\end{figure}

\begin{figure}[t!]
    \begin{subfigure}[t]{0.49\columnwidth}
        \includegraphics[width=\columnwidth]{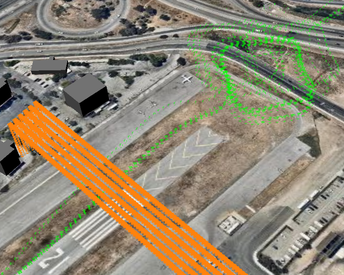}
        \caption{ROSplane reserved \opervols for loitering and descending.}
    \end{subfigure}
    \hfill
    \begin{subfigure}[t]{0.49\columnwidth}
        \includegraphics[width=\columnwidth]{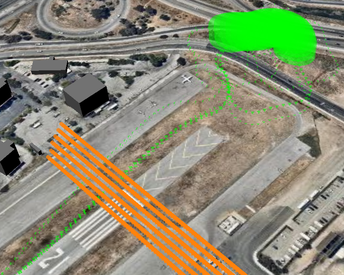}
        \caption{ROSplane loiters and waits for Quadrotors.}
    \end{subfigure}

    \vspace{3pt}
    \begin{subfigure}[t]{0.49\columnwidth}
        \includegraphics[width=\columnwidth]{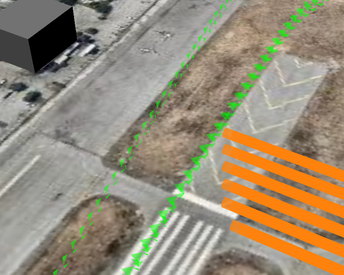}
        \caption{Quadrotors passed the runway before ROSplane descends.}
    \end{subfigure}
    \hfill
    \begin{subfigure}[t]{0.49\columnwidth}
        \includegraphics[width=\columnwidth]{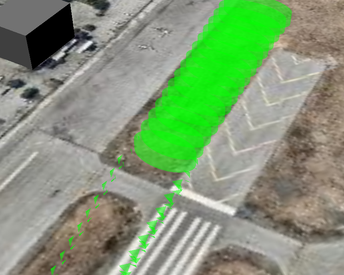}
        \caption{ROSplane descends.}
    \end{subfigure}
    \caption{\small Visualization of a landing scenario with heterogeneous air vehicles in an airport.
             The \opervols for Hector Quadrotors are annotated with orange and \opervols for the ROSplane are shown in green.
             Reserved \opervols are outlined with dots, and \opervols in use are represented with solid tubes.
    }\label{fig:citysim}
\end{figure}

\section{Related Work}
\label{sec:related}


\paragraph*{Collision Avoidance Protocols}

Prior to the development of the UTM ecosystem, traffic management protocols for manned aircraft include the family of Traffic Alert and Collision Avoidance Systems (TCAS)~\cite{tcas_ii,kochenderfer2012next,ACAS,ACASXu,air_safety_protocols_survey,Unmanned_Safety_survey}.
UTM and TCAS are complementary---the former is for long range strategic safety against loss of separation with other aircraft and  static obstacles, weather events, and anomalous behaviors, while the latter is for shorter-range tactical safety.
Accordingly the protocol we discuss (in~\secref{protocol}) coordinates over longer range and not \emph{only} for potential collision avoidance. \utmtool could be augmented with existing collision avoidance protocols in the future.   For instance, if an aircraft violates its \opervol in our protocol,
then a TCAS-like protocol can be used to avoid collision.





\paragraph*{Formal Approaches to UTM and Collision Avoidance}

%
The formal methods' research community has engaged with the problem of  air-traffic management in a number of different ways.
There have been several works on  formal analysis of TCAS~\cite{TCAS_verification_1997,livadasRTSS99,AirTrafficManagementSystem1999,Livadas2000HighlevelMA},
ACAS X~\cite{ACASX_verif_tacas,ACASX_verif_emsoft,reluplex},
and other protocols~\cite{JM:2012:small,ALAS2014,TaylorDistributedVerification,formalmethodsDrone,ModelBasedVerif_SemiAutonomous,UmenoL07,ALAS2014}.\footnote{\scriptsize\url{https://ti.arc.nasa.gov/news/acasx-verification-software/}}
These verification efforts rely on various simplifying assumptions such as precise state estimates,
straight-line trajectories, constant velocity of the intruder and ownership.
%
Algorithms to synthesize safe-by-construction plans for multiple drones flying in a shared airspace have been developed in~\cite{flybylogic,drona,schouwenaars2006safe,cyphyhouse_icra2020}.
These approaches rely on predicting and communicating future behavior of participating aircraft under different sources of uncertainty \cite{TaylorDistributedVerification,flybylogic,drona}.
%

In~\cite{bharadwaj2019traffic}, the authors present an approach for decentralized policy synthesis for route planning of individual vehicles modeled as Markov decision processes.
Our approach decouples the low-level dynamically feasible planning from the distributed coordination,
and solves the latter problem using a centralized coordinator (Airspace Manager) via distributed mutual exclusion over regions of the airspace (Section~\ref{sec:protocol}).
In \cite{bharadwaj2021traffic_TCNS}, the authors present a framework for decentralized controller synthesis for different managers of neighboring airspaces.
They use finite game and assume-guarantee approaches to generate decision-making mechanisms that satisfy linear temporal logic specifications.
An application of their approach is to design policies for airspace managers that enforce a maximum number of vehicles in the airspace or maximum loitering time.
Their framework assumes the operating regions for actions such as takeoff or loitering are predefined.
Our framework is complementary to this work as we show how a vehicle can generate an \opervol based on its vehicle dynamics from infinite choices of regions and time.

\section{A Formal Model of Operation Volumes}
\label{sec:contract}

\newcommand{\ToSet}[1]{\ensuremath{\llbracket #1 \rrbracket}}

In this section, we formalize the notion of \opervols described in~\cite{utm_conops} which is the fundamental building block for UTM protocols.
This formalization is also implemented in \utmtool for creating, manipulating, and reasoning about \opervols.
We refer to a UAS participating in the UTM system as an \emph{agent}, or equivalently, an \emph{air vehicle}.
Every agent in the system has a unique identifier.
The set of all possible identifiers is $\ID$.
We assume that each agent has access to a common global clock  which takes non-negative real numbers.
The 
\emph{airspace} is modeled as a compact subset $\X \subseteq \reals^3$.
Large airspaces may have to be divided into several smaller airspaces,
and one has to deal with hand-off across airspaces.
In this paper, we do not handle this problem of air vehicles entering and leaving $\X$. Other works have synthesized safe protocols for this problem (e.g. \cite{bharadwaj2021traffic_TCNS}).
The airspace is different from the state space of individual air vehicles
which may have many other state components like velocity, acceleration, pitch and yaw angles, etc.
Informally, an \opervol is a schedule for an air vehicle for occupying airspace.
%
\begin{definition}
\label{def:contract}
An \emph{operating volume (\contract)}
is a finite sequence of pairs
$C = (R_1, T_1),(R_2,T_2)$, $\ldots,(R_k,T_k)$
where each $R_i \subseteq \X$ is a compact subset of the airspace,
and $T_i$'s is a monotonically increasing sequence of time points.
\end{definition}

The total \emph{time duration} $T_k - T_1$ of the \contract $C$ is denoted by $C.\dur$,
and the length $k$ of $C$ is denoted by $C.\len$.
Further, we denote the last time point $T_k$ by $C.T_\last$, the last region $R_k$ by $C.R_\last$,
and the union of all regions, $\bigcup_{i=1}^k R_i$, by $C.R_\all$ as shorthands.
We denote the set of all possible contracts as $\contracts$.
An air vehicle meets an \contract at real-time $t$ if
\begin{inparaenum}[(1)]
\item $t \in [T_i, T_{i+1})$  for any $i < k$ implies that the air vehicle is located within $R_i$, and
\item $t \geq T_k$ implies that the agent is located within $R_k$ \emph{ever after $T_k$}.
\end{inparaenum}

\begin{definition}
    Two \Contracts are \emph{time-aligned} if they use the same sequence of time points.
    Given two time-aligned \Contracts,
    $C^a = (R^a_1, T_1)$, $\ldots,(R^a_k,T_k)$ and
    $C^b = (R^b_1, T_1)$, $\ldots,(R^b_k,T_k)$,
    and a set operation $\oplus \in \{\cap, \cup, \setminus\}$, we define
    \[
    C^a \oplus C^b \triangleq (R^a_1 \oplus R^b_1, T_1),\dotsc,(R^a_k \oplus R^b_k, T_k).
    \]
\end{definition}
We can generalize the definition to \Contracts that are not \emph{time-aligned},
and the detail derivation is provided in Appendix~\ref{appx:contract}.

Several concepts are defined naturally set operations on \Contracts.
We abuse notation sometimes and use $C$ as the set represented by contract $C$, i.e. the set
\begin{align*}
C \triangleq&\bigcup_{i=1}^{k-1} \{(r, t) \mid r \in R_i \land T_i \leq t < T_{i+1}\} \\
            &\cup \{(r, t) \mid r \in R_k \land T_k \leq t\}.
\end{align*}
For example, checking if $C^a$ \emph{refines} $C^b$ is to simply check
if $C^a$ uses less space-time than $C^b$ does,
i.e., $C^a \subseteq C^b$, or equivalently $C^a \setminus C^b = \emptyset$.

We will use the defined operations in our protocol in Section~\ref{sec:protocol} to update \opervols of individual agents and check intersections.
We will show how to create such \opervols using reachability analysis in Section~\ref{sec:heuristics}.

\section{A Simple Coordination Protocol using \Contracts}
\label{sec:protocol}
We present an example protocol for safe traffic management using \Contracts and its correctness argument.
We further implement the protocol with \utmtool.
The protocol involves a set of agents communicating \Contracts with an \emph{airspace manager or controller (\AirMgr)}.
The overall system is the composition of the airspace manager~(\AirMgr) and all agents~($\agent_i$):
\[
\mathit{Sys} \triangleq \AirMgr || \{\agent_i\}_{i\in\ID}.
\]

In Section~\ref{subsec:protocol-manager} and~\ref{subsec:protocol-agent}, we describe the protocol by showing the interaction between
participating agents and the \AirMgr through \inlinekrd{request}, \inlinekrd{reply}, and \inlinekrd{release} messages.
We then analyze the safety of the protocol under instant message delivery in Section~\ref{subsec:protocol-analysis},
and its liveness in Section~\ref{subsec:protocol-liveness}.

\subsection{Airspace Manager}
\label{subsec:protocol-manager}

\begin{figure}[b!]
    \begin{center}
        \lstinputlisting[language=NumKoord]{code/airspace_manager.ioa}
        \caption{Airspace Manager automaton.
            A model in Dione language~\cite{dione2019} with automated invariant checking for IOA is available in Appendix~\ref{appx:airmgr-dione}.}
        \label{fig:manager}
    \end{center}
\end{figure}

We design the \AirMgr as an Input/Output Automaton~(IOA)~\cite{lynch1996a} defined in \figref{manager}.
The \AirMgr keeps track of all contracts and checks for conflicts before approving new contracts.
It uses a mapping \inlinekrd{contr_arr} in which \inlinekrd{contr_arr[i]} records the contract held by agent $i$,
and a set \inlinekrd{reply_set} to store the agents whose requests are being processed and pending reply.

Whenever the \AirMgr receives a \inlinekrd{request_i(contr)} from agent $i$ (line~\ref{code:am-request}),
agent $i$ is first added to \inlinekrd{reply_set}.
Then, \inlinekrd{contr} is checked against all contracts of other agents by checking disjointness (line~\ref{code:am-disjointness}).
Only if the check succeeds, \inlinekrd{contr} is included in \inlinekrd{contr_arr[i]} via set union (line~\ref{code:am-addcontract}).

When $i$ is in \inlinekrd{reply_set},
the \inlinekrd{reply_i(contr)} action is triggered to reply to agent $i$ with the recorded \inlinekrd{contr=contr_arr[i]}~(line~\ref{code:am-reply}).
Note that the \AirMgr replies with the \emph{recorded contract} \inlinekrd{contr_arr[i]} at line~\ref{code:am-reply} irrespective of whether the \emph{requested contract} \inlinekrd{contr} in line~\ref{code:am-request} was included in \inlinekrd{contr_arr[i]} or not.
Finally, if the \AirMgr receives a \inlinekrd{release_i(contr)},
then it removes \inlinekrd{contr} from \inlinekrd{contr_arr[i]} via set difference (line~\ref{code:am-release}).

\subsection{Agent Protocol}
\label{subsec:protocol-agent}

The agent's coordination protocol sits in between a \emph{planner/navigator} that proposes \Contracts
and a \emph{controller} which drives the air vehicle to its target.
We will discuss approaches to estimate \Contracts for waypoint based path planners and waypoint following controllers in \secref{heuristics}.
Figure~\ref{fig:stateflow} shows the simplified state diagram of the agent protocol.
At a high level, agent $i$'s protocol starts in the idle state and initiates when
a \inlinekrd{plan} action with a given \inlinekrd{contr} is triggered by the agent's planner.
Then, the protocol requests this contract from the \AirMgr, and waits for the reply.
If the requested contract is a subset of the one replied by the \AirMgr, the agent protocol enters the moving state.
At this point the agent's controller starts moving the air vehicle and ideally following the contract strictly.
Once the air vehicle reaches the last region of \contract successfully, the protocol releases the unnecessary portion of the contract and goes back to idle state.
In the case that the requested contract is not a subset of the one replied by the \AirMgr, the protocol directly releases and retries.
If the agent violates the contract while moving, it notifies the \AirMgr that the contract is violated.
We provide the formally specified automaton and detail explanation of agent's protocol in Appendix~\ref{appx:protocol}.

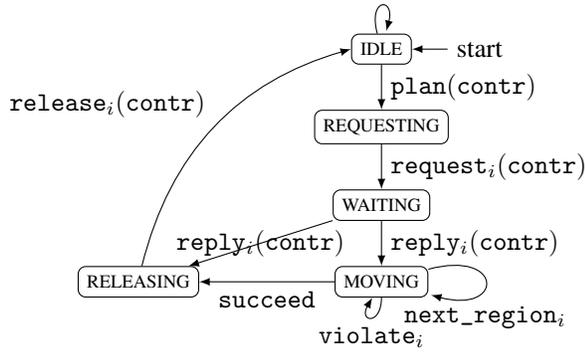
\begin{figure}[t!]
\centering
\begin{tikzpicture}[->, >=latex, scale=1,
    state/.style={draw,shape=rectangle,rounded corners=0.1cm,minimum width=0.7cm, node distance=0.6cm, font=\scriptsize}
]

\node[state, initial right]       (IDLE)       {IDLE};
\node[state, below=of IDLE]       (REQUESTING) {REQUESTING};
\node[state, below=of REQUESTING] (WAITING)    {WAITING};
\node[state, below=of WAITING]    (MOVING)     {MOVING};
\node[state, left=1.8cm of MOVING](RELEASING)  {RELEASING};

\path (IDLE)       edge node [right] {\inlinekrd{plan(contr)}}      (REQUESTING)
      (IDLE)       edge [loop above] (IDLE)
      (REQUESTING) edge node [right] {\inlinekrd{request_i(contr)}} (WAITING)
      (WAITING)    edge node [right] {\inlinekrd{reply_i(contr)}}   (MOVING)
      (MOVING)     edge [loop right] node [below=0.2cm] {\inlinekrd{next_region_i}}   (MOVING)
      (MOVING)     edge node [below] {\inlinekrd{succeed}}   (RELEASING)
      (MOVING)     edge [out=270,in=240, looseness=8] node [below] {\inlinekrd{violate_i}}   (MOVING)
      (RELEASING)  edge [bend left] node [above left] {\inlinekrd{release_i(contr)}} (IDLE.west)
      (WAITING)    edge node {\inlinekrd{reply_i(contr)}}   (RELEASING)
    ;

\end{tikzpicture}

\caption{Simplified state diagram for Agent.}\label{fig:stateflow}
\end{figure}

\subsection{Protocol Correctness: Safety}
\label{subsec:protocol-analysis}

We now discuss the safety property ensured by our protocol.
Here, $\agent_i.\texttt{curr\_contr}$ denotes the contract that the $i^{\mathit{th}}$ agent is following.
Assuming that none of the agents triggered their \inlinekrd{violate} action,
then an agent always follows its local contract \inlinekrd{curr_contr}.
In that case, collision avoidance is defined naturally as the disjointness between the \inlinekrd{curr_contr}s of all agents.
Our goal therefore is to show that the following proposition is an invariant of the system:
\begin{proposition}[Safety]\label{prop:safety}
If none of the agents triggered their $\texttt{violate}$ action, the current contracts followed by all agents are pairwise disjoint, i.e.,
\[
\small
\bigwedge_{i \in \ID} \bigwedge_{j\neq i,\\ j \in \ID} \agent_i.\texttt{curr\_contr} \cap \agent_j.\texttt{curr\_contr} = \emptyset.
\]
\end{proposition}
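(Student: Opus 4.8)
The plan is to prove the statement by exhibiting a stronger \emph{conjunctive} inductive invariant over the reachable states of $\mathit{Sys}$ and arguing by induction on the length of executions in which the \texttt{violate} action never fires. The invariant has two parts: \textbf{(A)} an \AirMgr-local claim that the recorded contracts are pairwise disjoint, $\bigwedge_{i \neq j} \texttt{contr\_arr}[i] \cap \texttt{contr\_arr}[j] = \emptyset$; and \textbf{(B)} a coupling claim tying each agent to the manager's bookkeeping, namely that whenever $\agent_i$ is following a contract we have $\agent_i.\texttt{curr\_contr} \subseteq \texttt{contr\_arr}[i]$. The desired conclusion is then immediate: for $i \neq j$, $\agent_i.\texttt{curr\_contr} \cap \agent_j.\texttt{curr\_contr} \subseteq \texttt{contr\_arr}[i] \cap \texttt{contr\_arr}[j] = \emptyset$, where the containment uses (B) for both agents and the final equality uses (A). So the real content is establishing (A) and (B) as a joint invariant.

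First I would discharge (A) as an invariant of the \AirMgr in \figref{manager} in isolation. Initially every entry of \texttt{contr\_arr} is empty, so disjointness holds vacuously. For the inductive step, only the \texttt{request} and \texttt{release} transitions modify \texttt{contr\_arr}: a \texttt{release} shrinks an entry by set difference and therefore cannot create a new overlap, while a \texttt{request} augments $\texttt{contr\_arr}[i]$ by union \emph{only after} the explicit disjointness check against every other recorded contract succeeds, so the augmented entry stays disjoint from all others. This is exactly the local manager invariant the paper reports as machine-checked by Dione, so I would cite that discharge and focus the manual effort on (B).

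The substantive step is (B), which couples the distributed agent state to the manager state. Working under the instant-message-delivery assumption of this section, the request/reply exchange is effectively atomic, so when $\agent_i$ enters the moving state it does so on a \texttt{reply} carrying the current $\texttt{contr\_arr}[i]$, and it proceeds only when its requested contract is a subset of that reply; setting $\agent_i.\texttt{curr\_contr}$ to the requested contract then yields $\agent_i.\texttt{curr\_contr} \subseteq \texttt{contr\_arr}[i]$ at entry, regardless of whether the disjointness check passed. I would then verify that no later transition shrinks $\texttt{contr\_arr}[i]$ below $\agent_i.\texttt{curr\_contr}$: a \texttt{request} or \texttt{release} by a different agent $j$ touches only $\texttt{contr\_arr}[j]$ and, by (A), cannot intersect $\texttt{contr\_arr}[i]$ at all; agent $i$ issues no \texttt{request} while moving since it is no longer idle; and $i$'s own \texttt{release} on success discards only the already-traversed ``unnecessary portion'' of its contract as it departs the moving state.

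I expect the main obstacle to be precisely this last point in (B): formalizing that the portion released on success is disjoint from the suffix of the contract that \texttt{curr\_contr} still denotes, and more generally reconciling the interleaving of agent and manager transitions so that the snapshot carried by a \texttt{reply} remains valid through the moving phase even under instant delivery. This is also where the \texttt{violate}-free hypothesis earns its keep: it guarantees that the agent's physical occupancy actually tracks \texttt{curr\_contr}, so that ``unnecessary'' genuinely means already-vacated and so that set-disjointness of the \texttt{curr\_contr}s faithfully captures collision avoidance.
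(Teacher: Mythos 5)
Your proposal is correct and follows essentially the same route as the paper: your invariant (A) is exactly \lemref{am-contr} with the same case analysis of the \AirMgr's actions (and the same appeal to Dione), your coupling claim (B) is exactly \lemref{agt-contr} proved under the same instant-delivery assumption, and the final set-theoretic combination matches the paper's derivation of \propref{safety}. One tiny simplification you missed: for the release step in (B) the paper does not need your worried-about disjointness of the released portion from the remaining suffix---since the \emph{same} contract is subtracted from both $\agent_i.\texttt{curr\_contr}$ and $\texttt{contr\_arr}[i]$, the subset relation is preserved directly.
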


Our proof strategy is to show that first the global record of contracts maintained by the \AirMgr are pairwise disjoint by \lemref{am-contr}.
Then, we ensure the local copy by each agent is as restrictive as the global record and hence preserves disjointness by \lemref{agt-contr}.
With \lemref{am-contr} and \lemref{agt-contr}, \propref{safety} is derived following basic set theory.
We start from \lemref{am-contr} for the \AirMgr.
\begin{lemma}
	\lemlabel{am-contr}
If none of the agents triggered their \inlinekrd{violate} action, all contracts recorded by the \AirMgr are pairwise disjoint, i.e.,
\[
\bigwedge_{i \in \ID} \bigwedge_{j\neq i, j \in \ID} \AirMgr.\texttt{contr\_arr}[i] \cap \AirMgr.\texttt{contr\_arr}[j] = \emptyset.
\]
\end{lemma}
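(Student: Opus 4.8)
The plan is to establish \lemref{am-contr} as a state invariant of the \AirMgr automaton of \figref{manager}, proved by induction over the length of a reachable execution (equivalently, over reachable states). First I would classify the actions according to how they touch the recorded map \texttt{contr\_arr}: the only action that can \emph{enlarge} an entry is \inlinekrd{request_i(contr)}, which unions the requested contract into \texttt{contr\_arr}$[i]$ at line~\ref{code:am-addcontract}; the actions that can only \emph{shrink} an entry are \inlinekrd{release_i(contr)} (the set-difference at line~\ref{code:am-release}) together with any \inlinekrd{violate_i} the manager handles; and every remaining action, in particular \inlinekrd{reply_i(contr)}, leaves \texttt{contr\_arr} unchanged. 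Write $\Phi$ for the pairwise-disjointness predicate in the statement, and recall from \secref{contract} that $\cap,\cup,\setminus$ act on the space-time semantics $\ToSet{\cdot}$ of contracts.

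For the base case, the initial state sets every entry of \texttt{contr\_arr} to the empty contract, so all pairwise intersections are $\emptyset$ and $\Phi$ holds. For the inductive step, suppose $\Phi$ holds in a reachable state $s$ and let $s \arrow{a} s'$ be a transition. If $a$ does not modify \texttt{contr\_arr}, then $\Phi$ transfers verbatim. If $a$ only shrinks a single entry, say $s'.\texttt{contr\_arr}[i] \subseteq s.\texttt{contr\_arr}[i]$ with all other entries unchanged, then every intersection involving $i$ can only get smaller while the rest are untouched, so disjointness is preserved. These cases dispose of \inlinekrd{release_i}, \inlinekrd{violate_i}, \inlinekrd{reply_i}, and all no-ops.

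The heart of the argument is the \inlinekrd{request_i(contr)} case. Let $c$ be the requested contract. The union at line~\ref{code:am-addcontract} is performed \emph{only} when the guard at line~\ref{code:am-disjointness} holds, i.e. when $c \cap s.\texttt{contr\_arr}[j] = \emptyset$ for every $j \neq i$; in that case $s'.\texttt{contr\_arr}[i] = s.\texttt{contr\_arr}[i] \cup c$ while $s'.\texttt{contr\_arr}[j] = s.\texttt{contr\_arr}[j]$ for all $j \neq i$. Then for every such $j$, by distributivity of $\cap$ over $\cup$,
\[
s'.\texttt{contr\_arr}[i] \cap s'.\texttt{contr\_arr}[j] = \big(s.\texttt{contr\_arr}[i] \cap s.\texttt{contr\_arr}[j]\big) \cup \big(c \cap s.\texttt{contr\_arr}[j]\big) = \emptyset,
\]
the first disjunct vanishing by the induction hypothesis $\Phi(s)$ and the second by the guard; pairs avoiding $i$ are unaffected. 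Hence $\Phi(s')$, completing the induction.

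The one point I would be most careful about---and the only real obstacle---is justifying that the guard at line~\ref{code:am-disjointness} ranges over \emph{all} other agents $j \neq i$, so that no already-recorded contract of another agent is overlooked, and that the disjointness test and the union/difference updates are interpreted as genuine set operations on $\ToSet{\cdot}$. Once those are pinned down, the elementary identities $(A\cup B)\cap C = (A\cap C)\cup(B\cap C)$ and $A\setminus B\subseteq A$ apply directly and the remaining obligations are routine. Note that the no-violation hypothesis is not actually needed for this lemma, since the manager's record is kept disjoint purely by its own guarded updates; it is carried along only to match the hypothesis of \propref{safety}.
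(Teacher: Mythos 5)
Your proof is correct and takes essentially the same approach as the paper: the paper's (much terser) proof is exactly your case analysis on the \AirMgr's actions---\texttt{request\_i} only unions in a contract when the disjointness guard over all other agents' entries holds, \texttt{reply\_i} leaves \texttt{contr\_arr} unchanged, and \texttt{release\_i} only shrinks entries---with your base case and distributivity identity merely spelling out what the paper leaves implicit in ``examining all actions.'' Your closing observation that the no-violation hypothesis is not actually needed for this local invariant is sound and consistent with the paper's remark that Dione proves this lemma automatically as an inductive invariant of the \AirMgr automaton alone.
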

\begin{proof}
This is a direct result from examining all actions of the \AirMgr automaton.
The \inlinekrd{request_i} action ensures that a \inlinekrd{contr} is only included into \inlinekrd{contr_arr[i]}
if it is disjoint from all other contracts \inlinekrd{contr_arr[j]}.
The \inlinekrd{reply_i} action does not modify \inlinekrd{contr_arr} at all,
and \inlinekrd{release_i} action only shrinks the contracts.
\end{proof}
\begin{lemma}\lemlabel{agt-contr}
If none of the agents triggered their \inlinekrd{violate} action,
the \inlinekrd{curr_contr} of agent $i$ is always as restrictive as \inlinekrd{contr_arr[i]}, i.e.,
\[
\bigwedge_{i \in \ID} \agent_i.\texttt{curr\_contr} \subseteq \AirMgr.\texttt{contr\_arr}[i].
\]
\end{lemma}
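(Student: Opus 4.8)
The plan is to prove \lemref{agt-contr} as an inductive invariant of the composed system $\Sys$, arguing by induction over the discrete transitions of an arbitrary reachable execution. I would take as the invariant the predicate itself, $I \triangleq \bigwedge_{i \in \ID} \agent_i.\texttt{curr\_contr} \subseteq \AirMgr.\texttt{contr\_arr}[i]$, and show that $I$ holds in every start state and is preserved by every action. In a start state each agent is idle with $\agent_i.\texttt{curr\_contr} = \emptyset$, so $\emptyset \subseteq \AirMgr.\texttt{contr\_arr}[i]$ holds no matter how the manager initializes its array; this settles the base case.

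For the inductive step I would perform a case analysis on the action taken, using the fact that under instant message delivery (the standing assumption of \secref{protocol}) the messages \inlinekrd{request}, \inlinekrd{reply}, and \inlinekrd{release} are shared input/output actions that update the relevant agent and the \AirMgr in a single synchronized step. The routine cases are: any action indexed by $j \neq i$ modifies only \inlinekrd{contr_arr[j]} and $\agent_j$, so both sides of the $i$-th conjunct are untouched; \inlinekrd{plan}, \inlinekrd{next_region_i}, and \inlinekrd{succeed} are agent-internal and leave \inlinekrd{contr_arr[i]} fixed while leaving \inlinekrd{curr_contr} unchanged or only shrinking it, preserving the inclusion; and \inlinekrd{request_i} leaves \inlinekrd{curr_contr} unchanged while enlarging \inlinekrd{contr_arr[i]} by a set union (or leaving it fixed when its disjointness check fails), which can only help the inclusion.

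The two cases with real content are \inlinekrd{reply_i} and \inlinekrd{release_i}. In \inlinekrd{reply_i}, the \AirMgr transmits precisely its recorded \inlinekrd{contr_arr[i]} and does not modify it; on the agent side the transition into the moving location is guarded by the test that the requested contract is a subset of the replied one, and its effect sets \inlinekrd{curr_contr} to that requested contract. Hence immediately after the step $\agent_i.\texttt{curr\_contr} = \text{requested} \subseteq \text{replied} = \AirMgr.\texttt{contr\_arr}[i]$, which is exactly the $i$-th conjunct; when the guard fails the agent instead diverts to releasing without adopting a contract that exceeds \inlinekrd{contr_arr[i]}. In \inlinekrd{release_i}, the single synchronized action applies the \emph{same} set difference by the released contract to $\agent_i.\texttt{curr\_contr}$ and to \inlinekrd{contr_arr[i]}, so I would close the case with the elementary fact that $X \subseteq Y$ implies $X \setminus Z \subseteq Y \setminus Z$.

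The main obstacle I anticipate is making the \inlinekrd{reply_i} case watertight rather than any heavy computation: one must be sure that the set the manager sends as the reply is the very value of \inlinekrd{contr_arr[i]} seen by the agent's guard, and that nothing perturbs \inlinekrd{contr_arr[i]} between the manager's effect and the agent's evaluation. This is exactly what synchronized (instant) delivery guarantees, and it is reinforced by the observation that \inlinekrd{contr_arr[i]} is written only by \inlinekrd{request_i} and \inlinekrd{release_i}, neither of which is enabled while $\agent_i$ sits in its waiting location. With that synchronization pinned down, the remaining obligations reduce to the three set-theoretic facts used above---$\emptyset \subseteq Y$, monotonicity of union, and preservation of inclusion under a common set difference---after which \lemref{am-contr} and \lemref{agt-contr} together yield \propref{safety}.
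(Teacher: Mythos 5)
Your proposal is correct and takes essentially the same approach as the paper: a case analysis over the actions under instant delivery, observing that \texttt{curr\_contr} is modified only in \texttt{reply} (where it is copied from the \AirMgr's record) and \texttt{release} (where the same set difference is applied synchronously to both sides), with your version merely spelling out the inductive-invariant bookkeeping (base case, actions of other agents, \texttt{request}) that the paper's terse proof leaves implicit. One cosmetic discrepancy: in the paper's agent automaton the \texttt{reply} effect sets \texttt{curr\_contr} to the \emph{replied} contract $\AirMgr.\texttt{contr\_arr}[i]$ itself (yielding equality at that step), not to the requested contract as you state, but the inclusion you need holds under either reading.
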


\begin{proof}
This is proven by examining all actions of agent automaton regardless of the order of execution.
Due to the space limit, we only consider when actions are delivered instantaneously.
The \inlinekrd{curr_contr} is only modified in \inlinekrd{reply} and \inlinekrd{release} actions.
In \inlinekrd{reply} action, \inlinekrd{curr_contr} is to copy \inlinekrd{contr} sent by the \AirMgr and thus \lemref{agt-contr} holds.
In \inlinekrd{release} action, \inlinekrd{curr_contr} removes \inlinekrd{contr} first;
then \inlinekrd{release} is delivered to the \AirMgr to remove \inlinekrd{contr}.
As a result, \lemref{agt-contr} still holds.
In~Appendix~\ref{appx:safety-delayed}, we extend the proof so that, even under delayed communication settings, the lemma still holds when the order of received messages is preserved.
\end{proof}

\subsection{Protocol Correctness: Liveness}\label{subsec:protocol-liveness}

\newcommand{\reschedule}{\ensuremath{\mathit{reschedule}}\xspace}

For liveness property, we would like to see every agent eventually reaches its target.
In our protocol, this is formulated as every agent eventually reaches the last region of its \contract that it proposed in \inlinekrd{plan} action
and triggers its \inlinekrd{succeed} action.
The overall proof is to show that an agent can always find an \opervol which the \AirMgr approves.

Since a newly proposed \contract may be rejected,
we denote it as \inlinekrd{plan_contr} to distinguish from \inlinekrd{curr_contr} which an agent always follows.
It is worth noting that liveness depends on the \contract for each agent.
A simple scenario where liveness cannot be achieved is when the final destinations of two agents are too close;
thus the last region where one agent stays at the end could block the other agent forever.
Therefore, we first require the following assumption:
\begin{assumption}[Disjointness of different agents' regions]
	\label{ass:disjointness}
For any agent $i\in\ID$, all regions that it plans to traverse are disjoint from the last regions of all other agents. Formally,
	\[
        \bigwedge_{j\neq i} \texttt{plan\_contr}_i.R_\all \cap \AirMgr.\texttt{contr\_arr}[j].R_\last = \emptyset.
	\]
\end{assumption}
Assumption~\ref{ass:disjointness} can be achieved by querying the \AirMgr when planning
since \lemref{agt-contr} ensures the \AirMgr's record of \Contracts includes the agents' \Contracts.
\begin{definition}
Given an \contract $C=(R_1, T_1)$, $\ldots,(R_k,T_k)$ and a time duration $\delta$,
we define $\reschedule(C, \delta)$ as:
{\small
\[
\reschedule(C, \delta) \triangleq (R_1, T_1 + \delta), (R_2, T_2 + \delta), \ldots,(R_k, T_k + \delta)
\]
}
\end{definition}

Now we start our argument for liveness.
By our protocol design, if agent~$i$ never violates its \contract,
it must reach the last region successfully.
Therefore, we only have to prove that agent $i$'s request to the \AirMgr must be accepted eventually.
With Assumption~\ref{ass:disjointness},
we prove the claim that an agent $i$ can always \emph{reschedule} a plan so that the \AirMgr approves its \contract.

\begin{proposition}[Liveness]\label{prop:liveness}
If $\texttt{plan\_contr}_i$ satisfies Assumption~\ref{ass:disjointness},
then there is a time duration $\delta_0$ such that
the \AirMgr approves $\reschedule(\texttt{plan\_contr}_i, \delta)$ for all $\delta \geq \delta_0$.
Formally,
\begin{align*}
\bigwedge_{j\neq i, j \in \ID} \reschedule(\texttt{plan\_contr}_i, \delta)\ \cap \nonumber \\ \AirMgr.\texttt{contr\_arr}[j] = \emptyset.
\end{align*}
\end{proposition}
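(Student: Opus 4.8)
The plan is to read ``the \AirMgr approves $\reschedule(\texttt{plan\_contr}_i, \delta)$'' exactly as the disjointness test the \AirMgr runs before admitting a contract (line~\ref{code:am-disjointness}): namely that $\reschedule(\texttt{plan\_contr}_i, \delta)$, viewed as a space-time set, is disjoint from $\AirMgr.\texttt{contr\_arr}[j]$ for every $j \neq i$. So the goal collapses to exhibiting a single threshold $\delta_0$ past which all of these pairwise space-time intersections are empty.

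The key structural facts I would establish first are two dual observations. On one side, rescheduling only translates a contract along the time axis and leaves its spatial footprint untouched, so every $(r,t) \in \reschedule(\texttt{plan\_contr}_i, \delta)$ satisfies both $t \geq T^i_1 + \delta$ and $r \in \texttt{plan\_contr}_i.R_\all$. On the other side, writing each other agent's held contract as $C^j = \AirMgr.\texttt{contr\_arr}[j] = (R^j_1, T^j_1),\dotsc,(R^j_{k_j}, T^j_{k_j})$, the intermediate regions $R^j_1,\dotsc,R^j_{k_j-1}$ are occupied only on the bounded intervals $[T^j_m, T^j_{m+1})$, whereas the last region $R^j_{k_j} = C^j.R_\last$ persists for every $t \geq C^j.T_\last$. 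In words: past a finite horizon, agent $j$ occupies only its final region.

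Next I would set $T_{\max} \triangleq \max_{j \neq i} C^j.T_\last$, the largest last time point among the finitely many other agents holding non-empty contracts, and take any $\delta_0 > T_{\max} - T^i_1$. Then for $\delta \geq \delta_0$ and any point $(r,t)$ in the rescheduled plan we have $t \geq T^i_1 + \delta > T_{\max} \geq C^j.T_\last$ for all $j$. Hence if this point also lay in $C^j$, the time $t$ exceeding $C^j.T_\last$ would force $r \in C^j.R_\last$; combined with $r \in \texttt{plan\_contr}_i.R_\all$ this puts $r \in \texttt{plan\_contr}_i.R_\all \cap C^j.R_\last$, which is empty by Assumption~\ref{ass:disjointness}. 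The contradiction gives $\reschedule(\texttt{plan\_contr}_i, \delta) \cap C^j = \emptyset$ for every $j \neq i$, which is precisely the approval condition.

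The main obstacle, and the one step worth handling carefully, is the semi-infinite last region: since $R^j_{k_j}$ stretches over all $t \geq C^j.T_\last$, no finite shift can separate agent $i$ from it along the time axis alone, which is exactly why the \emph{spatial} disjointness in Assumption~\ref{ass:disjointness} between $\texttt{plan\_contr}_i.R_\all$ and each $C^j.R_\last$ is indispensable rather than merely convenient. A secondary technical point is ensuring $T_{\max}$ is well-defined even when $\ID$ is large or infinite; this needs only that finitely many other agents hold non-empty contracts (empty contracts contribute nothing to the intersection), which always holds in a concrete execution of the protocol.
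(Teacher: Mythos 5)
Your proof is correct and takes essentially the same route as the paper's: you use the fact that $\mathit{reschedule}$ preserves regions so that Assumption~\ref{ass:disjointness} gives spatial disjointness of $\texttt{plan\_contr}_i.R_\all$ from each $\AirMgr.\texttt{contr\_arr}[j].R_\last$, then shift past $\max_{j\neq i}\AirMgr.\texttt{contr\_arr}[j].T_\last$ so that every point of the shifted contract occurs after agent $j$'s last time point, forcing any would-be intersection point into $C^j.R_\last$ and hence into the empty set. The only differences are cosmetic: you spell out the pointwise contradiction that the paper explicitly skips (``expand the definition''), you note the finiteness needed for the max to be well-defined, and your threshold $\delta_0 > T_{\max} - T^i_1$ is a marginally sharper variant of the paper's $\delta_0 = \max_{j\neq i}\AirMgr.\texttt{contr\_arr}[j].T_\last$ (equivalent in effect since clock values are non-negative).
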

\begin{proof}
Following Assumption~\ref{ass:disjointness},
we first derive the disjointness of regions of airspace.
For any $j\neq i$ and any $\delta$,
\begin{align}\label{eq:R-disjoint}
\reschedule(\texttt{plan\_contr}_i, \delta).R_\all \nonumber \\
\cap\ \AirMgr.\texttt{contr\_arr}[j].R_\last = \emptyset
\end{align}
because \reschedule does not modify the regions.
Further, we derive that any $\delta_j \geq \AirMgr.\texttt{contr\_arr}[j].T_\last$,
the following two \Contracts are disjoint:
\begin{equation}\label{eq:OVC-disjoint}
\small
\reschedule(\texttt{plan\_contr}_i, \delta_j) \cap \AirMgr.\texttt{contr\_arr}[j] = \emptyset
\end{equation}
The proof is to expand the definition and skipped here.
Intuitively, this is because every agent $j$ are expected to reach and stay in $\AirMgr.\texttt{contr\_arr}[j].R_\last$ ever after
$\delta_j \geq \AirMgr.\texttt{contr\_arr}[j].T_\last$.
Therefore,
the rescheduled \contract for agent $i$ does not overlap with \Contracts of any other agent $j$.

Finally, let
\(
\delta_0 \triangleq \max\limits_{j\neq i}\ \AirMgr.\texttt{contr\_arr}[j].T_\last
\) and it directly leads to the proof of \propref{liveness}.
\end{proof}

In addition to the manual proof presented,
we have also explored using Dione~\cite{dione2019} with Dafny proof assistant~\cite{leino_dafny:_2010} to generate induction proof for invariants of IOA.
We choose this tool due to its support for IOA and automated SMT solving for set operations on \Contracts.
We discover that the tools can automatically prove the local invariant \lemref{am-contr} for the \AirMgr.
However, it lacks support for continuous time to model agents and communication delay;
hence we cannot use Dione prove other lemmas and propositions directly.

\section{Reachability Analysis and Operation Volumes}\label{sec:heuristics}

\newcommand{\trans}{\ensuremath{\hat{\pi}}\xspace}
In \secref{protocol}, we show that the protocol ensures safety and liveness.
However, the proof assumes that the air vehicle does not violate its \contract.
In this section, we discuss how to use existing reachability analyses to over-approximate regions of space-time an air vehicle may visit.
This over-approximation can be used to
\begin{inparaenum}[(i)]
\item generate \Contracts that are unlikely to be violated, or
\item monitor air vehicles at runtime to predict and avoid possible violations.
\end{inparaenum}

Formally, given a dynamical system with state space $D$,
a set of initial states $Q_0 \subseteq D$, and a time horizon $[T_0, T_1)$,
reachability analysis tools can compute \emph{reachtube}, a set of states $Q_1$ reachable within $[T_0, T_1)$.
We further require a function $\trans: \pow{D} \mapsto \pow{\X}$ to transform state space to air-space.
Then, one can build an \contract
\(C_{reach} = (-\infty, \trans(Q_0)), (T_0, \trans(Q_1)), (T_1, \X)\).
This represents that, when air vehicle stays within $\trans(Q_0)$ before $T_0$,
it will stay within $\trans(Q_1)$ between $T_0$ and $T_1$,
and it can be anywhere after $T_1$.
We then can merge $C_{reach}$ for different time horizons to propose \Contracts.

In this work, we use DryVR~\cite{FanQMVCAV017} to compute reachtubes from simulation traces.
DryVR uses collected traces to learn the sensitivity of the trajectories of the air vehicle,
and generates reachtubes for a new simulation trace with probabilistic guarantees.
We use DryVR to study waypoint following for a quadcoptor model, Hector Quadrotor~\cite{hector_quadrotor}
and a fixed-wing model, ROSplane~\cite{ROSplane}, using the Gazebo simulator.

\paragraph{Hector Quadrotor}

\begin{figure}[tb!]
    \includegraphics[width=0.5\textwidth]{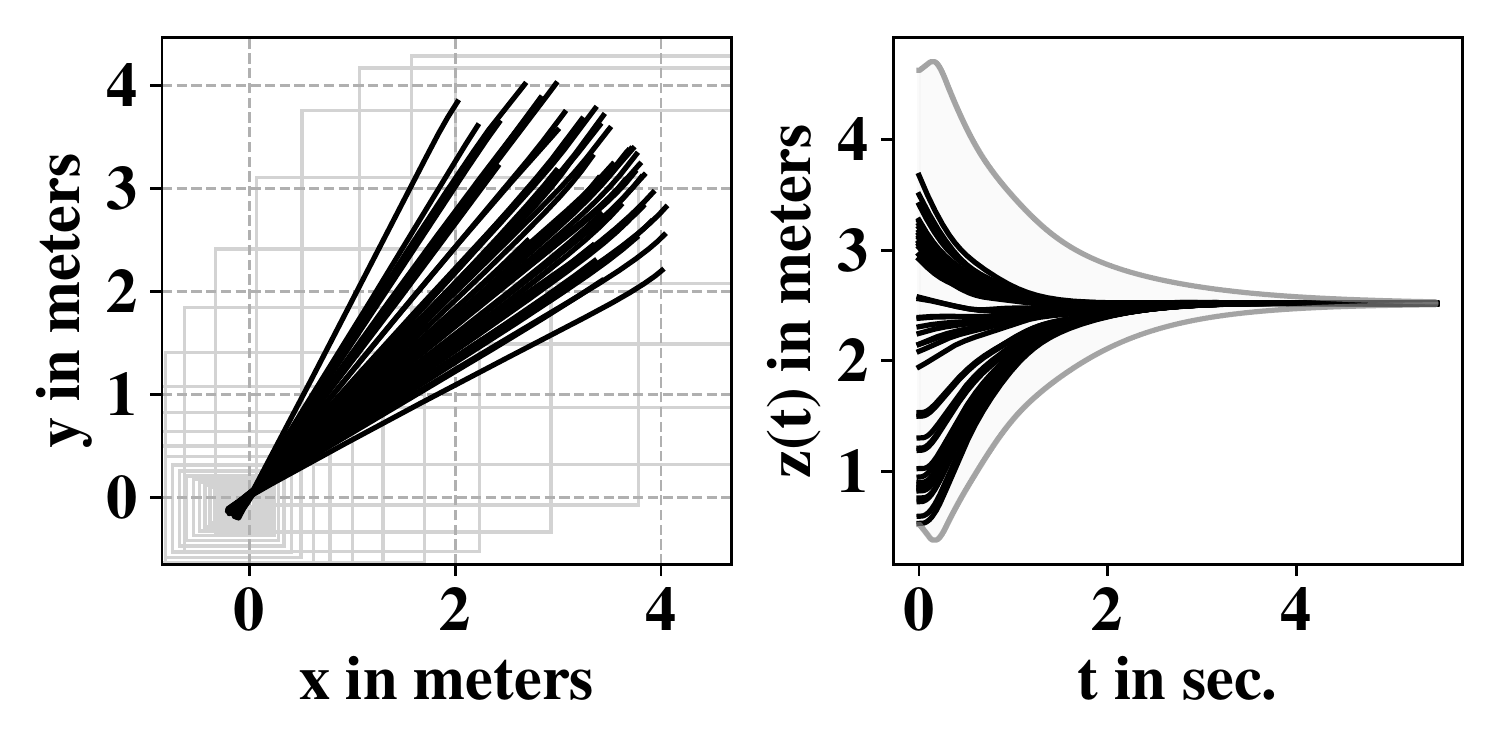}
    \caption{Simulation traces in \emph{Black} and boundary of the reachtube computed by DryVR in \emph{Gray}
             for Hector Quadrotor going to the waypoint at (0, 0, 2.5).
             The reachtube is projected to xy-plane (\emph{Left}) and z-axis over time (\emph{Right}).}
    \label{fig:hector-quad}
\end{figure}

The state variables for Hector Quadrotor already include $x$, $y$, and $z$  for positions but also other variable for orientation and velocity.
Hence, \trans for this model is to simply apply projections to $x$, $y$, and $z$ axes.
We compute $C_{reach}$ for a scenario which the air vehicle goes to the waypoint (0, 0, 2.5).
\figref{hector-quad} shows the projection of $C_{reach}$ as hyper-rectangles to xy-plane~(left) and to z-axis against time~(right).
Observe the projection to xy-plane in \figref{hector-quad},
we can generate \Contracts using a \Conse strategy that covers $C_{reach}$ for the entire time horizon with a bounding rectangle,
or an \Aggre strategy to use the gray rectangles as an \contract with short time intervals.
In general, we can generate a spectrum of \Contracts from $C_{reach}$ between \Conse and \Aggre strategies,
and all \Contracts in this spectrum can guarantee low probability of violations provided by the reachability analysis.
We further explore the performance trade-off between strategies in \secref{experiment}.

\paragraph{ROSplane}

Similarly, the state variables for ROSplane also include $x$, $y$, and $z$ for positions but in North-East-Down~(NED) coordinates.
Hence, \trans for this model is to apply projections to $x$, $y$, and $z$ axes and transform to the coordinates used by the Airspace Manager.
We collect the traces and then divide traces into segments to analyze several path primitives denoted as modes for ROSplane~\cite{ROSplane}.
In \figref{rosplane}, we show the reachtubes for two modes, namely loiter and descend.
Unsurprisingly, the plane may not maintain the desired altitude (z-axis) precisely while loitering,
and thus it is important to reserve enough range of altitude in \Contracts for ROSplane.

\begin{figure}[t!]
    \begin{subfigure}[c]{0.4\columnwidth}
        \includegraphics[height=5cm, clip]{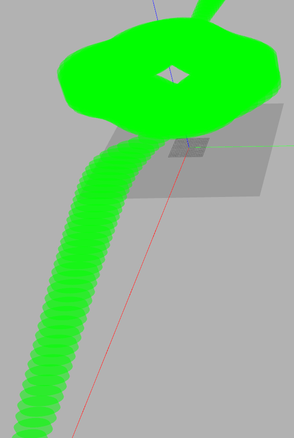}
    \end{subfigure}
    \begin{subfigure}[c]{0.5\columnwidth}
        \begin{subfigure}[t]{0.5\columnwidth}
            \includegraphics[height=2.5cm, clip]{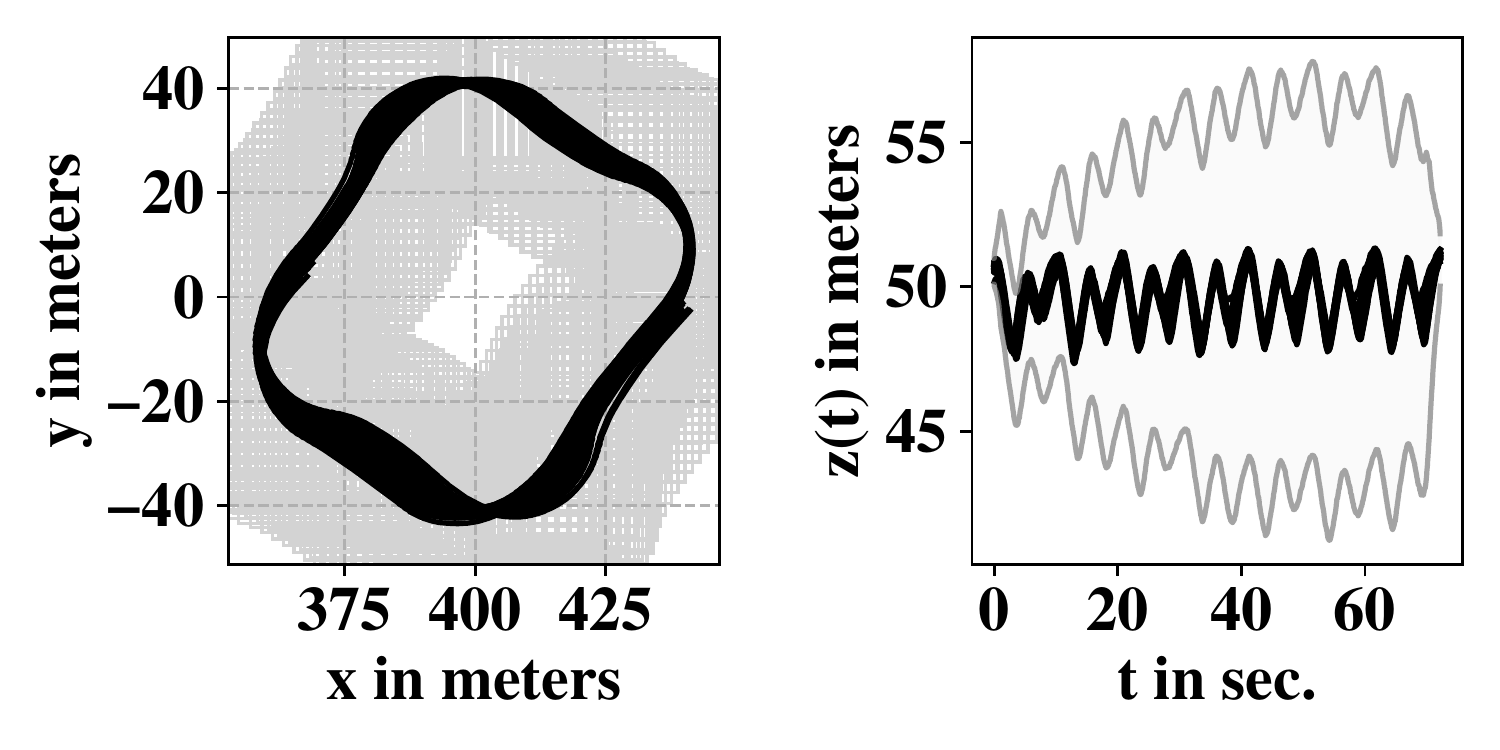}
        \end{subfigure}

        \begin{subfigure}[t]{0.5\columnwidth}
            \includegraphics[height=2.5cm, clip]{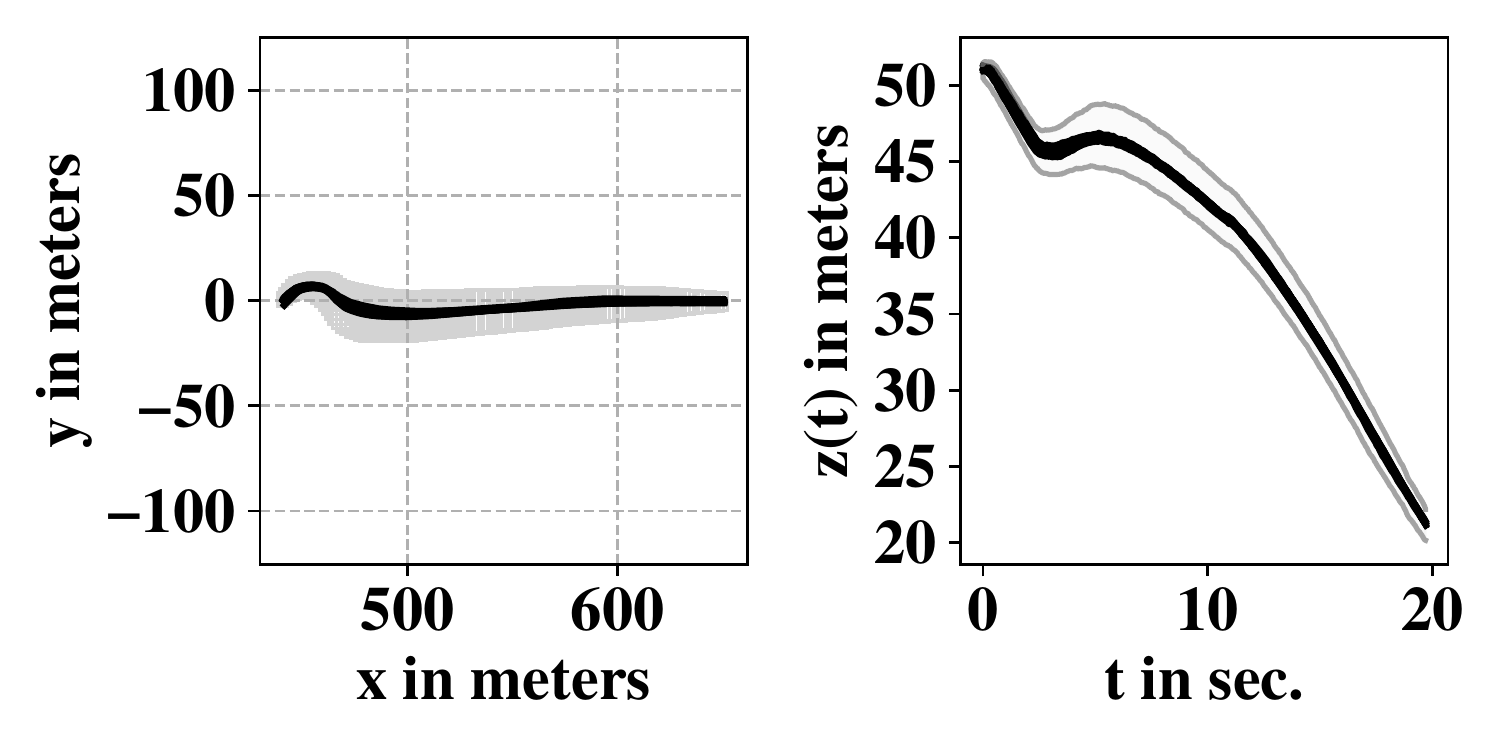}
        \end{subfigure}
    \end{subfigure}

    \caption{Reachtube by DryVR in 3D (\emph{Left}) for ROSplane to loiter and then descend.
        The traces and reachtube for loiter~(\emph{Top Row}) and descent~(\emph{Bottom Row}) are projected to xy-plane (\emph{$1^{\mathit{st}}$ column}) and z-axis over time (\emph{$2^{\mathit{nd}}$ column}).}
    \label{fig:rosplane}
\end{figure}

In summary, we are able to derive useful, i.e., not overly conservative, \opervols using reachtubes from DryVR even with simulations with noises as shown in \figref{rosplane}.
The main engineering difficulty we faced in using DryVR is to divide traces into proper segments that are from the same mode for ROSplane.
This requires domain knowledge on each air vehicle model,
and we refer readers to \cite{hector_quadrotor} and \cite{ROSplane}.

\section{\utmtool implementation and evaluation}
\label{sec:experiment}

\newcommand{\Corr}{\textsc{Corridor}\xspace}
\newcommand{\Loop}{\textsc{Loop}\xspace}
\newcommand{\RND}[1]{\textsc{Random#1}\xspace}
\newcommand{\City}{\textsc{CitySim}\xspace}

\newcommand{\Arena}{\ensuremath{25m\times 25m} arena\xspace}
\newcommand{\Repeat}{three\xspace}

\newcommand{\Agt}{\#\textbf{A}\xspace}
\newcommand{\Qe}{\#\textbf{Q\textsubscript{e}}\xspace}
\newcommand{\Rect}{\#$Rect$\xspace}
\newcommand{\Vio}{\textbf{\%V}\xspace}

Our experiment is conducted using \utmtool.
\utmtool and all simulation scripts are available at our GitHub repository.\footnote{\url{https://github.com/cyphyhouse/CyPhyHouseExperiments}}
To better present our result within page limits,
we only include experiments with the Hector Quadrotor model~\cite{hector_quadrotor} with its default waypoint following controller.
We first describe \utmtool, then the scenarios, and experiment results followed by a brief discussion.

\subsection{\utmtool: System Details}
\label{subsec:toolkit}

\utmtool consists of four major components:
\begin{inparaenum}[(1)]
    \item Dione verification discussed in~\secref{protocol},
    \item reachability analysis and reachtubes from DryVR described in~\secref{heuristics},
    \item an executable reference UTM protocol implemented in Python of Section~\ref{sec:protocol}, and
    \item UTM protocol simulation and visualization with CyPhyHouse~\cite{cyphyhouse_icra2020}.
\end{inparaenum}
Here we focus on the executable UTM protocol and simulation.

To faithfully follow the semantics of our example UTM protocol,
we first provide a data structure to represent and easily manipulate rectangular \opervols.
We provide APIs for designing executable (timed) input/output automata that can interact with simulated vehicles in CyPhyHouse,
and implement an execution engine to simulate the input/output automata alongside CyPhyHouse.
To reuse reachtube from DryVR, we also design APIs to load pre-computed reachtubes for estimating \opervols.
Finally, we also provide several scripts to setup desired scenarios and environments in CyPhyHouse,
and implement a plugin to better visualize \opervols in the Gazebo simulation backend of CyPhyHouse.

\subsection{Evaluation Scenarios}
Following the protocol defined in~\secref{protocol},
a \emph{scenario} for evaluation is specified by
\begin{inparaenum}[(1)]
\item the set of agents \ID\ which we consider $\Agt=|\ID|$
\item the world map and the predefined sequence of waypoints for each agent denoted as the \emph{map}, and
\item the strategy  agents use to to generate \Contracts from their waypoints.
\end{inparaenum}
For example, the \emph{Left} figure in~\figref{maps} shows a scenario with $\Agt=6$ drones in the \Corr map.
It uses \Aggre strategy to generate \Contracts visualized as the red and blue frames.

We evaluate our protocol in the following maps shown in~\figref{maps}:
\begin{enumerate}[(1)]
    \item \Corr simulates two sets of drones on the opposite sides of a tight air corridor trying to pass through.
          This may happen in a garage-like space where a fleet of air vehicles enter or leave.
    \item \Loop simulates each drone following the vertices of the same closed polygonal chain.
          This models common segments in the routes for all air vehicles such as pickup packages or return to base.
    \item \City is a more realistic scenario which simulates drones flying in a city block.
    \item \RND{$N$} are scenarios where each drone follows a sequence of $N$ random waypoints inside a \Arena.
          This is to validate our protocol via random testing.
\end{enumerate}
In addition, a designated landing spot for each drone is specified as the last waypoint in all maps to ensure the liveness property.
This avoids the situation where a landed drone blocks other air vehicles.

\begin{figure*}[t!]
    \centering
    \hfill
    \includegraphics[width=0.3\textwidth]{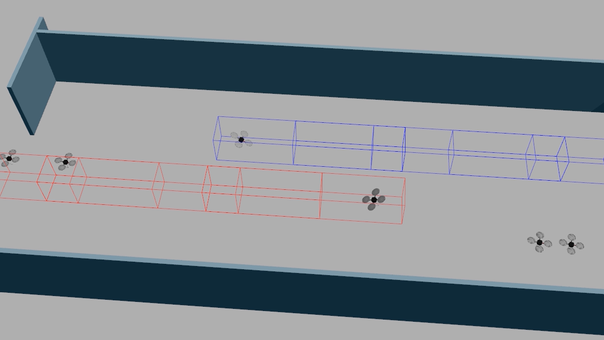}
    \hfill
    \includegraphics[width=0.3\textwidth]{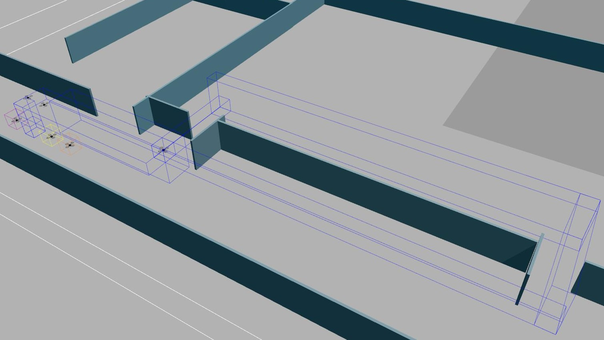}
    \hfill
    \includegraphics[width=0.3\textwidth]{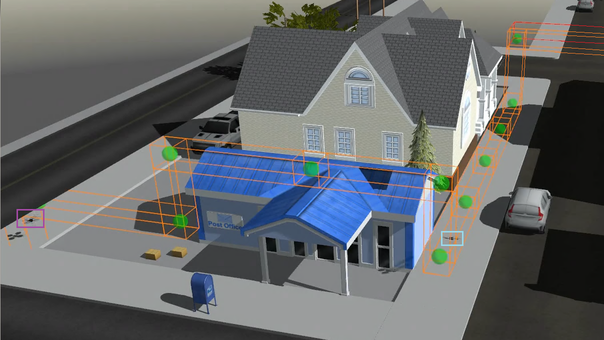}
    \hfill
    \caption{Maps: \Corr~(\emph{Left}), \Loop~(\emph{Mid}), and \City~(\emph{Right})}\label{fig:maps}
\end{figure*}

\paragraph*{\Conse and \Aggre \opervols}
We implement two strategies, namely \Conse and \Aggre, for generating \Contracts from given waypoints and positions.
Both strategies are deterministic and use only \emph{hyper-rectangles} for specifying regions in \Contracts.
As discussed in \secref{heuristics},
\Conse reserves large rectangles covering consecutive waypoints with longer durations between time points.
Thus, it acquires unnecessarily large volumes and may obstruct other agents.
In contrast, \Aggre heuristically selects smaller rectangles and shorter durations.
Therefore, \Aggre is less likely to block other agents but increases the workload of the \AirMgr
because the \opervols (numbers of rectangles) are more complex.

\subsection{Experimental Results}
\paragraph*{Setup}
Our simulation experiments were conducted on a machine with 4 CPUs at 3.40GHz, 8GB memory, and a NVidia GeForce GTX 1060 3GB video card.
The software platform is Ubuntu 16.04 LTS with ROS Kinetic and Gazebo~9.
For the time usage, we report the simulation time from Gazebo (time elapsed in the simulated world), instead of wall clock time
to help reduce the variations in the results due to irrelevant workload on our machine.
To address the nondeterminism arising from concurrency in simulating multiple agents,
we simulate each scenario \Repeat times,
and report the average value of each metric.

\begin{figure*}[t!]

\pgfplotstableread[col sep = comma]{plot-csv/simtime.csv}\SimTime

\begin{tikzpicture}
\begin{axis}[
width=0.38\textwidth,
height=4.5cm,
xlabel={\Agt},
x label style={at={(current axis.south)},anchor=center},
ylabel={Time (sec)},
y label style={at={(current axis.north)},rotate=270,anchor=south},
xmin=1.5, xmax=10.5,
ymin=0, ymax=500,
xtick={2,4,6,8,10},
ytick={0, 250, 500},
legend pos=north west,
ymajorgrids=true,
grid style=dashed,
legend style={font=\tiny, at={(0, 1)}},
]

\addplot [mark=*, line width=1pt]
    table [x=A, y=CORR-MAX] {\SimTime};
\addplot [mark=diamond*, line width=1pt]
    table [x=A, y=LOOP-MAX] {\SimTime};
\addplot [mark=square*, line width=1pt]
    table [x=A, y=RND4-MAX] {\SimTime};
\addplot [mark=triangle*, line width=1pt]
    table [x=A, y=RND6-MAX] {\SimTime};

\addplot [mark=o, dotted, mark options={solid}]
    table [x=A, y=CORR-AVG] {\SimTime};
\addplot [mark=diamond,dotted, mark options={solid}]
    table [x=A, y=LOOP-AVG] {\SimTime};
\addplot [mark=square, dotted, mark options={solid}]
    table [x=A, y=RND4-AVG] {\SimTime};
\addplot [mark=triangle, dotted, mark options={solid}]
    table [x=A, y=RND6-AVG] {\SimTime};

\legend{\Corr, \Loop, \RND{4}, \RND{6}}

\end{axis}
\end{tikzpicture}
\pgfplotstableread[col sep = comma]{plot-csv/avg-query.csv}\AvgQe
\begin{tikzpicture}
\begin{axis}[
width=0.38\textwidth,
height=4.5cm,
xlabel={\Agt},
x label style={at={(current axis.south)},anchor=center},
ylabel={\Qe/s},
y label style={at={(current axis.north)},rotate=270,anchor=south},
xmin=1.5, xmax=10.5,
ymin=0, ymax=25,
xtick={2,4,6,8,10},
ytick={0, 10, 20},
ymajorgrids=true,
grid style=dashed,
]

\addplot [mark=o]
table [x=A, y=CORR-QE] {\AvgQe};
\addplot [mark=diamond]
table [x=A, y=LOOP-QE] {\AvgQe};
\addplot [mark=square]
table [x=A, y=RND4-QE] {\AvgQe};
\addplot [mark=triangle]
table [x=A, y=RND6-QE] {\AvgQe};

\end{axis}
\end{tikzpicture}
\begin{tikzpicture}
\begin{axis}[
width=0.38\textwidth,
height=4.5cm,
xlabel={\Agt},
x label style={at={(current axis.south)},anchor=center},
ylabel={\Rect/s},
y label style={at={(current axis.north)},rotate=270,anchor=south},
xmin=1.5, xmax=10.5,
ymin=0, ymax=25,
xtick={2,4,6,8,10},
ytick={0, 10, 20},
ymajorgrids=true,
grid style=dashed,
]

\addplot [mark=o]
table [x=A, y=CORR-REG] {\AvgQe};
\addplot [mark=diamond]
table [x=A, y=LOOP-REG] {\AvgQe};
\addplot [mark=square]
table [x=A, y=RND4-REG] {\AvgQe};
\addplot [mark=triangle]
table [x=A, y=RND6-REG] {\AvgQe};

\end{axis}
\end{tikzpicture}

\caption{Response time per agent~(\emph{Left}),
    \#emptiness queries per second~(\emph{Mid}),
    and \#rectangles checked by the \AirMgr per second~(\emph{Right}) for each map using \Conse strategy.
    Max is in \emph{Solid marks and lines} and Avg. is in \emph{Hollow marks and dotted lines}}\label{fig:sim-time}\label{fig:queries}
\end{figure*}
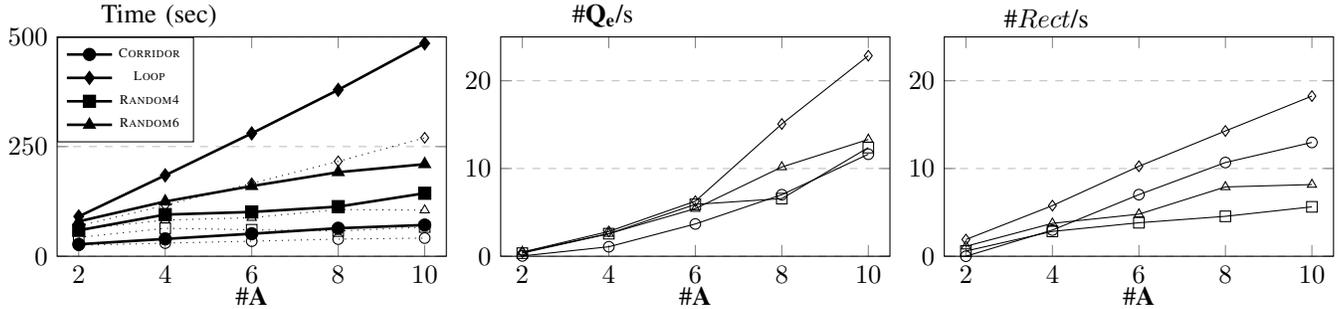

\paragraph*{Response Time and Workload}
\figref{sim-time} shows the response time for each drone starting from sending the first request to finish traversing all waypoints using \Conse strategy in \Corr, \Loop, and \RND{$N$} maps.
As expected, the maximum response time per agent grows linearly against the number of participating agents because, in the worst case,
all agents are accessing the shared narrow air-corridor, and the last agent has to wait until all other agents to finish.
The average response time shows that it is possible to finish faster if agents can execute concurrently in disjoint airspaces.
For example, the average time for 10 agents is smaller the time for 8 agents in \RND{6}.

In \figref{queries},
we consider the number of emptiness/disjointness queries~(denoted as \Qe) and
of hyper-rectangles to check~(denoted as \Rect) per second for the \AirMgr.
\Rect provides a finer estimation of computation resources needed by the \AirMgr than \Qe.
The growth of \Qe as expected is roughly quadratic against \Agt in the worst scenario
due to checking pairwise disjointness.
However, the growth of \Rect is not as fast and seemingly linear to \Agt in the worst scenario.
Therefore, it is very likely the workload increases only linearly instead of quadratically
when we use simple representation of \Contracts such as hyper-rectangles.

\newcolumntype{H}{>{\setbox0=\hbox\bgroup}c<{\egroup}@{}}

\begin{table}[t!]
\begin{center}
    \caption{Comparison of simulation time between \Conse and \Aggre.
    \Agt is the number of agents,
    Time(s) is the total time for simulation according to \emph{the simulated clock} in seconds,
    \Rect/s is the number of rectangles per second in the disjointness query of \Contracts by the \AirMgr.
}\label{table:comparison}%
    \setlength{\tabcolsep}{3pt}
    \scriptsize
        \begin{tabular}{|l|r|r|rH|r|rH|r|r|}
            \hline
            &      & \multicolumn{3}{c|}{\Conse} & \multicolumn{3}{c|}{\Aggre} &         & Increased \\ \cline{3-8}
 Map        & \Agt & Time(s) & \Rect/s &    \Vio & Time(s) & \Rect/s &    \Vio & Speedup &   \Rect/s \\ \hline
            &    2 &   27.52 &    0.00 &     0\% &   21.30 &    0.00 &     0\% &   1.29X &       N/A \\
            &    4 &   39.78 &    2.99 &     0\% &   27.24 &    6.16 &  3.57\% &   1.46X &     2.71X \\
 \Corr      &    6 &   51.63 &    7.02 &     0\% &   34.14 &   14.10 &  2.38\% &   1.51X &     2.06X \\
            &    8 &   64.18 &   10.68 &     0\% &   37.91 &   22.13 &  2.80\% &   1.69X &     2.01X \\
            &   10 &   95.47 &   12.97 &     0\% &   41.94 &   35.14 &  2.24\% &   2.28X &     2.07X \\ \hline
            &    2 &   91.05 &    1.91 &     0\% &   37.63 &    6.85 &  4.53\% &   2.42X &     3.59X \\
            &    4 &  184.88 &    5.77 &  2.08\% &   70.89 &   23.33 &  1.77\% &   2.61X &     4.04X \\
 \Loop      &    6 &  280.51 &   10.26 &  5.56\% &  103.28 &   40.52 &  7.34\% &   2.72X &     3.95X \\
            &    8 &  379.53 &   14.28 &  7.29\% &  134.62 &   63.71 &  8.01\% &   2.82X &     4.46X \\
            &   10 &  485.58 &   18.26 &  5.83\% &  169.25 &   90.94 &  9.48\% &   2.87X &     4.98X \\ \hline
 \City      &    2 &   77.42 &    1.77 &     0\% &   49.92 &    4.48 &  1.19\% &   1.55X &     2.53X \\ \hline
        \end{tabular}
    \end{center}
\end{table}

\paragraph*{\Conse vs. \Aggre.}
We compare the time between \Conse and \Aggre strategies in the \Corr, \Loop, and \City.
Due to the heavier demand for computational resources required, we only simulated with two drones for \City.
\tabref{comparison} shows that \Aggre strategy can reduce the overall response time and provide 1.3-2.8X speedup with larger number of participating agents.
This experiment shows that our framework is suitable for comparing and quantifying the trade-off between performance, safety, and workload under different \Contracts generation strategies.

\section{Discussions and Conclusions}
\label{sec:conclusion}
There is a strong need for a toolkit for formal safety analysis and larger scale empirical evaluations of different UTM concepts and protocols.
In this paper, we present \utmtool, a toolkit with an executable formal model of UTM operations and study its safety, scalability, and performance.

Our toolkit \utmtool offers open and flexible reference implementation of a UTM coordination protocol using ROS and Gazebo.
Our formal analyses in \utmtool illustrate how formal reasoning can be applied to the family of UTM de-conflicting protocols.
We discovered the capability but also the lack of features of Dione~\cite{dione2019} and Dafny~\cite{leino_dafny:_2010} for providing automated proofs,
and to our knowledge, there is no other proof assistant for IOA that also supports the modeling of \Contracts.
We further studied the connection between \Contracts and the reachabilty analysis,
and we showcased how to use DryVR to over-approximate the reachable regions of airspace using simulation traces.
The simulator also makes it possible to study different strategies for reserving \Contracts.

Some of the simplifying assumptions made and consequently limitations of \utmtool can be removed with careful engineering, while others require brand new ideas.
Handling timing and positioning inaccuracies, heterogeneous vehicles, fall in the former category.
We have partly addressed this category using existing reachability analyses in \secref{heuristics}.
In the latter category, a major concern is when there are unavoidable violations of \Contracts due to, for example, hardware failures.
Integration with existing predictive failure detection or failure mitigation strategies and collision avoidance protocols,
incorporation of human operators, or notifications to other participating agents for collision avoidance.
Finally, an important extension is the coordination protocol for multiple airspace managers that still ensures liveness and safety.

\section*{Acknowledgment}
The authors were supported in part by research grants from the National Science Foundation (CyPhyHouse: 1629949 and FMitF: 1918531) and The Boeing Company.
We thank John L. Olson, Aaron A. Mayne, and Michael R. Abraham from The Boeing Company for valuable technical discussions.

\bibliographystyle{IEEEtran}
\bibliography{sayan1,hussein}

\appendices

\section{\opervols are Closed under Set Operations}\label{appx:contract}

\begin{definition}
    \label{def:cont-sat}
    Any \contract $C$ represents a compact subset $\ToSet{C}$ of space-time:
    \begin{align*}
        \ToSet{C} \triangleq & \bigcup_{i=1}^{k-1} \{(r, t) \mid r \in R_i \land T_i \leq t < T_{i+1}\} \\
        & \cup \{(r, t) \mid r \in R_k \land T_k \leq t\}
    \end{align*}
    Further, given the current position $\mathit{pos}$ and clock reading $\mathit{clk}$ of an air vehicle, we say that the
    air vehicle meets the contract $C$ if and only if $(\mathit{pos}, \mathit{clk}) \in \ToSet{C}$.
\end{definition}

\begin{definition}
    Given any \contract $C=(R_1, T_1)$, $\ldots,(R_k,T_k)$,
    ${prepend}(C, T_{pp})$ where $T_{pp} < T_1$,
    $split(C, T_{sp})$ where $T_i<T_{sp}<T_{i+1}$,
    and $append(C, T_{ap})$ where $T_k < T_{ap}$ are defined as,
    \begin{align*}
    prepend(C, T_{pp}) \triangleq & \boldsymbol{(\emptyset,T_{pp})}, (R_1, T_1), \dotsc, (R_k, T_k) \\
    split(C, T_{sp}) \triangleq & (R_1, T_1), \dotsc,(R_i,T_i), \boldsymbol{(R_i,T_{sp})}, \\
    &\hspace{0.5in}(R_{i+1},T_{i+1}),\dotsc, (R_k, T_k) \\
    append(C, T_{ap}) \triangleq &(R_1, T_1), \dotsc, (R_k, T_k), \boldsymbol{(R_k,T_{ap})}
    \end{align*}
    Finally, we define $insert(C, T)$ function over any $T$,
    \[
    insert(C, T) \triangleq\left\{
    \begin{array}{l}
    prepend(C, T) \text{ if } T < T_1 \\
    split(C, T), \text{ if } T_i < T < T_{i+1} \\
    append(C, T), \text{ if } T_k < T \\
    C, \text{otherwise}.
    \end{array}\right.
    \]
\end{definition}%

\begin{lemma}
By definition, the \contract produced by $prepend$, $split$, $append$, and $insert$ functions represents the same set of space-time by $C$.
That is, given any \contract $C$ and time point $T$,
\[
\ToSet{insert(C, T)} = \ToSet{C}
\]
\end{lemma}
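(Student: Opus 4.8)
The plan is to prove the identity by a case analysis that mirrors the definition of $insert$, treating each of its four branches ($prepend$, $split$, $append$, and the trivial ``otherwise'' branch) separately and showing that $\ToSet{\cdot}$ is left unchanged in each. The guiding observation is that in every branch the two contracts differ only locally---a single pair is prepended, appended, or duplicated---so it suffices to compare the one segment of $\ToSet{\cdot}$ that is affected and to note that all remaining segments are reindexed but otherwise identical. I would therefore isolate the affected segment in each case and reduce the claim to an elementary identity about half-open intervals.

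For the $prepend$ branch ($T < T_1$), the new leading pair is $(\emptyset, T)$, which contributes the segment $\{(r,t)\mid r\in\emptyset \land T\le t< T_1\}$. Since no $r$ lies in $\emptyset$, this segment is empty, so $\ToSet{prepend(C,T)} = \emptyset \cup \ToSet{C} = \ToSet{C}$. For the $split$ branch ($T_i < T < T_{i+1}$), the single original segment arising from $(R_i,T_i)$, namely $\{(r,t)\mid r\in R_i \land T_i\le t< T_{i+1}\}$, is replaced by the two segments $\{(r,t)\mid r\in R_i \land T_i\le t< T\}$ and $\{(r,t)\mid r\in R_i \land T\le t< T_{i+1}\}$. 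Because both reuse the same region $R_i$, their union equals the original segment by the interval identity $[T_i,T)\cup[T,T_{i+1}) = [T_i,T_{i+1})$, valid since $T_i<T<T_{i+1}$. The $append$ branch ($T_k < T$) is analogous: the unbounded ``ever-after'' segment $\{(r,t)\mid r\in R_k \land T_k\le t\}$ is replaced by a bounded segment over $[T_k,T)$ together with a new unbounded tail over $[T,\infty)$, both over $R_k$, whose union is the original by $[T_k,T)\cup[T,\infty)=[T_k,\infty)$. The ``otherwise'' branch returns $C$ verbatim and is immediate.

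The substantive content is just the half-open interval algebra above; the only place demanding genuine care---and hence the main obstacle---is the bookkeeping of index shifts combined with the special status of the final ``ever-after'' pair. Concretely, I would verify that $append$ correctly converts the formerly unbounded last segment into a bounded one before attaching the new unbounded tail, and that $split$ can never be triggered on the last pair, since its side condition $T_i<T<T_{i+1}$ forces $i<k$ and so leaves the unbounded last segment untouched. Once these two boundary interactions are checked, every case collapses to one of the interval identities, completing the argument.
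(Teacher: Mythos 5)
Your proof is correct and is precisely the routine verification that the paper itself elides---the paper asserts the lemma ``by definition'' without a written proof, and your case analysis (the empty leading segment for $prepend$, the identity $[T_i,T)\cup[T,T_{i+1})=[T_i,T_{i+1})$ for $split$, and $[T_k,T)\cup[T,\infty)=[T_k,\infty)$ for $append$, with the ``otherwise'' branch trivial) is exactly the content that assertion appeals to. Your two boundary checks---that $split$'s side condition $T_i<T<T_{i+1}$ forces $i<k$ and so never disturbs the unbounded final segment, and that $append$ correctly converts the formerly unbounded tail into a bounded segment before attaching the new tail---are the only delicate points, and you handle both correctly.
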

With the help of $insert$, we can always align two \contract{}s.
We can then implement intersection, union, and difference on \Contracts on top of the same operators for airspace.

%
%
%
%
%

\begin{proposition}\proplabel{equiv}
Given any \contract $C^a$ and $C^b$,
any set operation $\oplus \in \{\cap, \cup, \setminus\}$,
we have the following equivalences:
\begin{align*}
\ToSet{C^a \oplus C^b} &= \ToSet{C^a} \oplus \ToSet{C^b}. \\
\end{align*}
\end{proposition}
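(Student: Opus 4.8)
The plan is to reduce to the time-aligned case and then argue fiberwise in the time coordinate. First I would invoke the preceding lemma on $insert$: given arbitrary contracts $C^a$ and $C^b$, I insert into each every time point appearing in the other, obtaining time-aligned contracts that share a common increasing sequence $T_1 < \cdots < T_k$. Since $insert$ leaves $\ToSet{\cdot}$ unchanged, and the operation $\oplus$ on non-aligned contracts is defined through exactly this alignment, both sides of the claimed identity are preserved by the reduction. Hence it suffices to prove the proposition when $C^a = (R^a_1,T_1),\ldots,(R^a_k,T_k)$ and $C^b=(R^b_1,T_1),\ldots,(R^b_k,T_k)$ are already time-aligned.

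For the aligned case, I introduce the time slabs $P_i \triangleq [T_i, T_{i+1})$ for $i<k$ and $P_k \triangleq [T_k,\infty)$, which partition $[T_1,\infty)$. For any contract $C$ with these time points, $\ToSet{C} = \bigcup_{i=1}^k R_i \times P_i$, a union of cylinders whose bases $P_i$ are pairwise disjoint subsets of time. The key fact is that set operations on space-time commute with restricting to a fixed time: for a fixed $t \ge T_1$ there is a unique slab index $i$ with $t \in P_i$, and the time slice $\{r \in \X : (r,t)\in \ToSet{C}\}$ equals exactly $R_i$ (and is empty when $t < T_1$).

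Then I would observe that for any subsets $S, S'$ of space-time and any fixed $t$, each of the three operations satisfies $\{r:(r,t)\in S\oplus S'\} = \{r:(r,t)\in S\}\oplus\{r:(r,t)\in S'\}$, since the fiber over $t$ of an intersection, union, or difference is the corresponding operation on the fibers. Applying this with $S=\ToSet{C^a}$ and $S'=\ToSet{C^b}$, the $t$-slice of $\ToSet{C^a}\oplus\ToSet{C^b}$ is $R^a_i \oplus R^b_i$ for the slab $i$ containing $t$. By the aligned definition $C^a \oplus C^b = (R^a_1\oplus R^b_1, T_1),\ldots,(R^a_k\oplus R^b_k,T_k)$, so the $t$-slice of $\ToSet{C^a\oplus C^b}$ is also $R^a_i\oplus R^b_i$. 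Since two subsets of space-time with equal fibers over every $t$ are equal, the two sides coincide, completing the proof.

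I expect the only delicate point to be the set-difference case, where one must use that the slabs $P_i$ are disjoint in time, so that a point in slab $i$ of $\ToSet{C^a}$ can only be removed by the slab-$i$ part of $\ToSet{C^b}$; the fiberwise formulation handles this uniformly, whereas a naive ``push the operation inside the union'' argument would otherwise require separate justification. A minor bookkeeping obligation is verifying that the alignment step via $insert$ carries the non-aligned definition of $\oplus$ to the aligned one, which follows directly from the cited lemma.
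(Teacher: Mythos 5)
Your proof is correct, and it follows the route the paper itself intends but omits: the paper's ``proof'' consists entirely of the remark that one expands the definition of $\ToSet{\cdot}$, and your argument---reducing to the time-aligned case via the $insert$ lemma and then comparing time-fibers, where the slice of $\ToSet{C}$ at any $t$ is the unique region $R_i$ with $t \in [T_i, T_{i+1})$ (or empty for $t < T_1$)---is exactly that expansion carried out rigorously. Your explicit handling of the $\setminus$ case through disjoint time slabs, and the observation that $prepend$ pads with $\emptyset$ so misaligned start times cause no trouble, supply details the paper leaves entirely implicit.
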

The proof is to expand the definition of $\ToSet{\cdot}$ and skipped here.
Given~\propref{equiv}, \Contracts are closed under all set operations;
hence we drop the $\ToSet{\cdot}$ notation.

\section{Agent Protocol Automaton}\label{appx:protocol}

\begin{figure}[ht!]
    \centering
    \lstinputlisting[language=NumKoord,breaklines=true]{code/agent.hioa}
    \caption{Agent automaton}\label{fig:agent}
\end{figure}

The detailed automaton is shown in Figure~\ref{fig:agent}.
The agent protocol has a \inlinekrd{status} variable to keep track of the discrete states in Figure~\ref{fig:stateflow}.
In addition, it uses three contract-typed variables for the following purposes:
\begin{inparaenum}[(1)]
    \item \inlinekrd{curr_contr} is a local copy of the current contract maintained for $i$ by the \AirMgr,
    \item \inlinekrd{plan_contr} is a contract that $i$ wants to propose to the \AirMgr to be able to visit the planned waypoints, and
    \item \inlinekrd{free_contr} tracks the releasable portion of the current contract \inlinekrd{curr_contr}.
\end{inparaenum}
In addition, the agent $i$ can read its
current position from the variable \inlinekrd{pos}  and the current global time from the variable \inlinekrd{clk}.
To provide a simple abstraction of arbitrary controllers for the agent,
we create the variable \inlinekrd{traj_ctrl} that stores a list of waypoints that the agent would follow when it is in the \inlinekrd{MOVING} status.
\inlinekrd{traj_ctrl} has two abstract interfaces: \inlinekrd{set_waypoints} to store the plan waypoints and calculate the necessary control signal (using PID, for example) and \inlinekrd{start} to start moving the agent to follow waypoints.

Each agent $i$ is initialized in \inlinekrd{IDLE} status.
When it receives a \inlinekrd{plan} action with a given \inlinekrd{contr} (line~\ref{code:ag-plan}),
the agent stores \inlinekrd{contr} as \inlinekrd{plan_contr} (line~\ref{code:ag-wp_to_contr}) and enters the \inlinekrd{REQUESTING} status (line~\ref{code:ag-requesting}).
A number of strategies may be followed to create contracts from waypoints lists, for example using reachability analysis  for a given  waypoint-tracking controller for the aircraft, or creating fixed-sized 3D rectangles centered at the segments connecting the waypoints.
We will discuss this further in \secref{experiment}.
Agent $i$ then makes a request \inlinekrd{request_i(contr)} with \inlinekrd{contr=plan_contr}
to denote the planned contract is sent as output,
and enters \inlinekrd{WAITING} status to wait for a reply from the \AirMgr (line~\ref{code:ag-request}).

When agent $i$ receives a \inlinekrd{reply_i(contr)} from the \AirMgr,
the contract \inlinekrd{contr} represents the contract of agent $i$ recorded by the \AirMgr (line~\ref{code:ag-reply}).
It is the union of all contracts agent $i$ have acquired and not yet released.
Agent $i$ first checks whether the contract \inlinekrd{curr_contr} is a subset of \inlinekrd{contr} or not.
If not, it means the local copy is less restrictive, so the \AirMgr may grant contracts to other agents conflicting with agent $i$.
This may lead to a safety violation, and hence agent $i$ raises a warning~(line~\ref{code:ag-reply}).
Otherwise, the agent checks if the contract \inlinekrd{contr} approved by the \AirMgr contains \inlinekrd{plan_contr}, i.e. \inlinekrd{plan_contr} $\subseteq$ \inlinekrd{contr} (line~\ref{code:ag-contr-approved}).
If yes, then it updates its \inlinekrd{curr_contr} to be equal to the new approved \inlinekrd{contr}. The agent then
calls \inlinekrd{traj_ctrl.start} to start following the waypoints,
and transitions to the \inlinekrd{MOVING} status.
If no, i.e. there is a part of \inlinekrd{plan_contr} that is not approved  \inlinekrd{contr} and not approved by the \AirMgr,
then agent $i$ does not change \inlinekrd{curr_contr}. It only checks the part of the contract saved by the \AirMgr that is no longer a part of \inlinekrd{curr_contr} of the agent. It then stores this portion of the contract in \inlinekrd{free_contr} (line~\ref{code:ag-free_agent}), and directly goes to the \inlinekrd{RELEASING} status to release and re-plan (line~\ref{code:ag-replan}).

When the agent is in the \inlinekrd{MOVING} status,
the \inlinekrd{next_region} action will be triggered whenever the global time passes the time bound of a region in the contract (line~\ref{code:ag-next_region}). That action will remove that pair of region and time point from \inlinekrd{plan_contr} (line~\ref{code:ag-remove_passed_region}).
Once there is only a single pair left in the planned contract \inlinekrd{plan_contr} and the contract is not violated,
the \inlinekrd{succeed} action is triggered to indicate the plan is executed successfully (line~\ref{code:ag-succeed}).
Agent $i$ then calculates the releasable contract \inlinekrd{free_contr} to be its contract \inlinekrd{curr_contr} excluding the last pair of \inlinekrd{plan_contr} (line~\ref{code:ag-succeed-release}).
Finally, it enters \inlinekrd{RELEASING} status.
It sends \inlinekrd{release_i(contr)} to notify the \AirMgr the contract that agent $i$ can release,
and goes back to \inlinekrd{IDLE} status (line~\ref{code:ag-idle}).

If at any point in time the current contract is violated, the \inlinekrd{violate} action would be triggered (line~\ref{code:ag-fail}). Remember that the contract is violated if the current pair of position and time of the agent is outside of the space-time specified by the contract.
This can happen in case the agent moves outside a region in a time interval of the contract,
or the agent could not reach a region before its specified time point in the contract. It then declares a violation to the \AirMgr.

\subsection{Safety under Delayed Communication}\label{appx:safety-delayed}

\newcommand{\TsendL}{\ensuremath{T^{\mathtt{rel}}_{snd}}\xspace}
\newcommand{\TrecvL}{\ensuremath{T^{\mathtt{rel}}_{rcv}}\xspace}
\newcommand{\TsendP}{\ensuremath{T^{\mathtt{rep}}_{snd}}\xspace}
\newcommand{\TrecvP}{\ensuremath{T^{\mathtt{rep}}_{rcv}}\xspace}
\newcommand{\TsendQ}{\ensuremath{T^{\mathtt{req}}_{snd}}\xspace}
\newcommand{\TrecvQ}{\ensuremath{T^{\mathtt{req}}_{rcv}}\xspace}

\begin{figure}[t!]
	\centering
	\scriptsize
	\begin{tikzpicture}[scale=0.8]
	\draw (-2,0) -- (-2,-2) (2,0) -- (2,-2);
	\node at (-2, 0.7) {$\agent_i$};
	\node at (-2, 0.3) {\inlinekrd{curr_contr=C_0}};
	\node at ( 2, 0.7) {\AirMgr};
	\node at ( 2, 0.3) {\inlinekrd{contr_arr[i]=C_0}};

	\node [left]  (TsendL) at (-2,-0.5) {\TsendL};
	\node [left]  (TrecvP) at (-2,-1.5) {\TrecvP};
	\node [right] (TsendP) at ( 2,-0.5) {\TsendP};
	\node [right] (TrecvL) at ( 2,-1.5) {\TrecvL};

	\draw[->] (TsendL) -- node[midway, sloped, above, pos=0.25] {\inlinekrd{release_i(C_1)}} (TrecvL);
	\draw[<-] (TrecvP) -- node[midway, sloped, above, pos=0.8] {\inlinekrd{reply_i(C_0)}}    (TsendP);

	\node at (-2.4, -1) {\inlinekrd{curr_contr=C_0\\C_1}};
	\node[text=red] at (-2, -2.3) {\inlinekrd{curr_contr=C_0}};
	\node[text=red] at ( 2, -2.3) {\inlinekrd{contr_arr[i]=C_0\\C_1}};
	\end{tikzpicture}

	\caption{Sequence diagram for an impossible unsafe \contract release.}\label{fig:cex}
\end{figure}
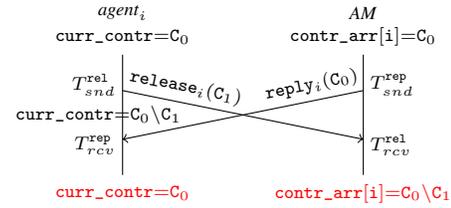

Now we consider the case where actions are delivered with bounded delay. 
Our proof is to show the impossibility of unsafe action sequences under the reliable communication.
Because the current contract of each agent is only updated after receiving \inlinekrd{reply_i} from the \AirMgr
and shrunk when sending \inlinekrd{release_i},
the potential counterexample shown in \figref{cex} can only happen if \inlinekrd{reply_i} is delivered to agent $i$ to update its local copy
while \inlinekrd{release_i} is delivered to the \AirMgr to shrink the global copy concurrently, i.e., $\TsendL<\TrecvP$ and $\TsendP<\TrecvL$.
Recall from \figref{stateflow} that our protocol ensures \inlinekrd{request_i}, \inlinekrd{reply_i}, and \inlinekrd{release_i} happen in such order by design.
We can prove this order of actions by induction on the formally defined automaton but skip the proof here for simplicity.
Therefore, we know that ther must be a \inlinekrd{request_i} sent after \inlinekrd{release_i},
and \inlinekrd{reply_i} is the response to this request.
Now we provide a simplified reliable communication assumption for this proof.
\begin{assumption}\label{ass:comm}
	The reliable communication guarantees the messages sent by the same agent are delivered in order.
	In particular, if $\agent_i$ sends a \inlinekrd{release_i} first and \inlinekrd{request_i} second,
	we denote \TsendL as the time \inlinekrd{release_i} is sent and \TrecvL as the time received,
	similarly \TsendQ and \TrecvQ for \inlinekrd{request_i}. Formally,
	\[
	\TsendL \leq \TsendQ \Rightarrow \TrecvL \leq \TrecvQ
	\]
\end{assumption}
Also by definition, $T^*_{snd} \leq T^*_{rcv}$ because sending must happen before receiving.
The order between actions can be formally specified as $\TsendL \leq \TsendQ \leq \TrecvQ \leq \TsendP$
because the request must have been delivered to the \AirMgr for it to trigger the \inlinekrd{reply_i}.
We can then derive $\TsendL \leq \TsendQ \leq \TrecvQ \leq \TsendP < \TrecvL$.
This contradicts to our assumption of reliable communication because messages from $\agent_i$ are delivered out of order.
To be more precise, \inlinekrd{release_i} is sent before ($\TsendL \leq \TsendQ$)
but delivered later~($\TrecvQ < \TrecvL$) than \inlinekrd{request_i}.
This contradicts to $\TsendL \leq \TsendQ \Rightarrow \TrecvL \leq \TrecvQ$.
Hence, we prove by contradiction.

\section{Airspace Manager Automaton in Dione}\label{appx:airmgr-dione}

The airspace manager automaton~(\AirMgr) modeled in Dione is provide in Figure~\ref{fig:airmgr-dione}.

\begin{figure*}[t!]
    \centering
    \lstinputlisting[
    language=Python,
    breaklines=true,
    basicstyle=\scriptsize,
    identifierstyle=\ttfamily,
    keywordstyle=\bfseries,
    numberstyle=\tiny,
    stepnumber=1,
    numbersep=4pt,
    numbers=left,
    ]{code/airspace_manager.dione}
    \caption{Airspace manager automaton in Dione}\label{fig:airmgr-dione}
\end{figure*}

\end{document}